\documentclass[11pt]{article}
\usepackage{amsfonts,amssymb,amsmath,amsthm,bbm,mathrsfs,mathtools}
\usepackage{graphicx, color, psfrag,hyperref,xcolor, plain}

\usepackage{authblk}

\topmargin -0.5in \textheight 9in \oddsidemargin 0.15in
\evensidemargin 0.25in \textwidth 6.15in
\usepackage[english]{babel}
\selectlanguage{english}

\usepackage{color}


\parskip=3pt plus 1pt minus 1pt

\newtheorem{theorem}{Theorem}

\newtheorem{proposition}{Proposition}
\newtheorem{lemma}{Lemma}

\def\exp{\mathop{\textrm{\rm exp}}\nolimits}              

\newcommand\blue[1]{\textcolor{blue}{}}

\newcommand{\be}{\begin{equation}}
\newcommand{\ee}{\end{equation}}
\newcommand{\vertiii}[1]{{\left\vert\kern-0.25ex\left\vert\kern-0.25ex\left\vert #1 
    \right\vert\kern-0.25ex\right\vert\kern-0.25ex\right\vert}}

\DeclarePairedDelimiter\abs{\lvert}{\rvert}%

\makeatletter
\let\oldabs\abs
\def\abs{\@ifstar{\oldabs}{\oldabs*}}

\newcommand{\E}{\mathbb E}

\newcommand{\scrf}{\ensuremath{\mathscr{F}}}
\newcommand{\scre}{\ensuremath{\mathscr{E}}}

\newcommand{\si}{\ensuremath{\sigma}}

\newcommand{\lf}{\ensuremath{\left(}}
\newcommand{\ri}{\ensuremath{\right)}}

\newcommand*\diff{\mathop{}\!\mathrm{d}}

\def\Var{\mathop{\textrm{\rm Var}}\nolimits}                  

\begin{document}

\title{{\bf Local Central Limit Theorem for Long-Range Two-Body Potentials at Sufficiently High Temperatures}}
\author[1]{Eric O. Endo  \footnote{ericossamiendo@gmail.com}}
\author[1,2]{Vlad Margarint \footnote{vldu7137@colorado.edu}}
\affil[1]{NYU-ECNU Institute of Mathematical Sciences, NYU Shanghai, 3663 Zhongshan Road North, Shanghai, 200062, China}
\affil[2]{University of Colorado Boulder, 2300 Colorado Avenue, Boulder, 80309-0395, USA}
\date{}                     
\setcounter{Maxaffil}{0}
\renewcommand\Affilfont{\itshape\small}

\maketitle

\begin{center}
{\bf Abstract}  
\end{center}
{\footnotesize
 Dobrushin and Tirozzi \cite{DT} showed that, for a Gibbs measure with the finite-range potential, the Local Central Limit Theorem is implied by the Integral Central Limit Theorem.
 Campanino, Capocaccia, and Tirozzi \cite{CCT} extended this result for a family of Gibbs measures for long-range pair potentials satisfying certain conditions. We are able to show for a family of Gibbs measures for long-range pair potentials not satisfying the conditions given in \cite{CCT}, that at sufficiently high temperatures, if the Integral Central Limit Theorem holds for a given sequence of Gibbs measures, then the Local Central Limit Theorem also holds for the same sequence. We also extend \cite{CCT} when the state space is general, provided that it is equipped with a finite measure.}

\vspace{4cm}

 {\em  AMS 2020 subject classification}: 60F05, 82B20, 82B05

{\em Keywords and phrases}: Local Central Limit Theorem, Central Limit Theorem, High Temperatures, Long-Range Potentials

\section{Introduction}

The Central Limit Theorem is one of the fundamental results of Probability Theory. It states that under certain conditions, the properly scaled sum of independent and identically distributed (i.i.d) random variables converges, as the number of terms tends to infinity to a Normal random variable. However, one aspect that is missing from the Central Limit Theorem is, for instance, the rate of convergence to the limiting Normal distribution. One way to generalize the Central Limit Theorem is the celebrated Berry-Esseen Theorem \cite{Berry, Ess} that provides a more quantitative statement, namely providing the rate at which the convergence takes place.

The Central Limit Theorem was studied for random fields generated by models coming from Statistical Mechanics (see \cite{ANP,Bolt, CoxGold, Coxg, DeConnick, FKT, Geo, GH, Hegerfeldt, Iago, Iago2,  Malyshev, Nah, Nak, Neader2, Neaderhouser1, Neaderhouser2, Neader3, NewmanWright, NewmanWright2,RR,Wu} for various techniques of proof). The fact that the Central Limit Theorem holds for models satisfying the FKG inequality and that have a finite susceptibility was proved in \cite{New}, extending the work in \cite{Newold} that considers only monotonic functionals of the random variables. K\"unsch studied the model in more generality in \cite{Kun}, and he provided applications of the Central Limit Theorem and the second derivative of the pressure. That work provides an example on which it is known that the Central Limit Theorem holds in more generality than the case of discrete-valued spins considered in \cite{CCT}. Moreover, Central Limit Theorems for the Ising ferromagnet in two or more dimensions are obtained in \cite{GalandJona, GalandM, MLof, Pick}. 

Local Central Limit Theorems are also a way to provide a more refined result than the Central Limit Theorem.  Their importance in Statistical Mechanics was noticed in the study of a family of models (see \cite{CCT, CampDeg, Del Grosso, DT}) in which the authors prove that assuming the Central Limit Theorem holds for a random field defined on $\mathbb{Z}^d$, then the Local Central Limit Theorem will hold as well. 
In particular, in \cite{DT} the authors proved it for the short-range potentials, while \cite{CCT} showed it for a family of long-range potentials. Local Central Limit Theorems are important in Statistical Mechanics since, with their help, one can deduce the equivalence of the Canonical Ensemble and the Grand-Canonical Ensemble for
spin systems and for particle systems (see \cite{DT,CampDeg} for more details).

In this paper, 
we consider a sufficiently high temperature regime, and we prove that for Gibbs fields with spins taking values in a measurable set $E$, equipped with a measure $\lambda$, with the condition $\lambda(E) < \infty$, for which the Central Limit Theorem is satisfied, then the Local Central Limit Theorem also holds. Our result complements \cite{CCT} where some families of absolutely summable long-range potentials that fail the condition in \cite{CCT} still satisfy the result at sufficiently high temperatures. We also extend \cite{CCT} where only discrete value spins are considered.

Our proof is similar to the proof in \cite{CCT}. It uses the study of the characteristic functions of the random field for small and big values of the parameter. The analysis of these cases is done in two separate lemmas in Section \ref{proofofthe mainresult} that use different techniques. They both rely on the analysis of the cluster expansion done in Section \ref{ce}. The main difference is the technique to show the absolute convergence of the cluster expansion, since this method is model dependent. While in \cite{CCT} they construct walks from polymers (see Theorem 3.2 in \cite{CCT}), we adapted the proof from \cite{FV}, Chapter 5, for sufficiently high temperatures. See \cite{FP} for results about convergence of cluster expansions.

As an application of our main result, we consider the one-dimensional long-range Ising models with polynomially decaying interaction $J_{xy}=\abs{x-y}^{-2+\alpha}$, with $0\le \alpha<1$. Because of this particular long-range interaction, these models undergo a phase transition at low temperatures, and the conditions in \cite{CCT} fail for them. As a second application of our main result, we have obtained that for the models considered in \cite{Kun} for which the Central Limit Theorem holds, we have that the Local Central Limit Theorem holds as well at sufficiently high temperatures.

The paper is divided into several sections, the first one being the Introduction. Section \ref{themodels} introduces the models, the Local Central Limit Theorem, our main result, and some applications. In Section \ref{ce} we perform the cluster expansion corresponding to our model and obtain the absolute convergence of the corresponding series at sufficiently high temperatures. In Section \ref{proofofthe mainresult} we prove our main result. Section \ref{proofofthe mainresult} is divided into two further subsections, each of them proves Theorem \ref{main} assuming the condition (\ref{hightemp}) and (\ref{samecampanino}) given in the theorem.

\section{The Models}\label{themodels}

\subsection{Definitions and Notation}
We consider the lattice set $S=\mathbb{Z}^d$ with $d\ge 1$. The  \emph{state space} $(E, \mathscr{E},\lambda)$ is a measurable space equipped with a finite measure $\lambda$. Let $\Omega=E^{\mathbb{Z}^d}$ and ($\Omega, \scrf)$ be the \emph{configuration space}, where $\scrf$ is the $\sigma$-algebra generated by the cylinder sets. We denote by $\sigma=(\sigma_i)_{i\in \mathbb{Z}^d}\in \Omega$ a configuration in $\mathbb{Z}^d$, and for a subset $\Lambda \subset \mathbb{Z}^d$, we denote by $\sigma_{\Lambda}=(\sigma_i)_{i\in \Lambda}$ a configuration in $\Lambda$.
We use the notation $\Lambda \Subset \mathbb{Z}^d$ to denote that $\Lambda$ is a finite subset on $\mathbb{Z}^d$.
For $\Lambda \Subset \mathbb{Z}^d$, let $\scrf_{\Lambda}$ be the smallest $\sigma$-algebra on $\Omega$ containing $\{\sigma_{\Delta} \in A\}$ over all $\Delta \subset \Lambda, A \in \scre^{\Delta}.$

For each $x,y \in \mathbb{Z}^d$ with $x\neq y$, $\Phi_{\{x,y\}}: \Omega \to \mathbb{R}$ is a $\scrf_{\{x,y\}}$-measurable function. The collection $\Phi=\{\Phi_{\{x,y\}}\}_{x,y\in \mathbb{Z}^d}$ is called a \emph{potential}. We say that a potential is \emph{absolutely summable} if
\be\label{normphi}
\vertiii{\Phi} :=\sup_{x\in \mathbb{Z}^d}\sum_{\substack{y\in \mathbb{Z}^d \\ y\neq x}}\lVert \Phi_{\{x,y\}}\rVert<\infty,
\ee
where $\lVert \cdot \rVert$ denotes the sup-norm. We also assume that the potentials are translation-invariant.

Define the \emph{Hamiltonian} in a finite set $\Lambda$ with boundary condition $\omega \in \Omega$ associated to the potential $\Phi$ by 
\be
H_{\Lambda}^{\omega}(\sigma)= \sum_{\substack{x,y\in \Lambda \\ x\neq y }}\Phi_{\{x,y\}}(\sigma) + \sum_{\substack{x\in \Lambda \\ y\notin \Lambda }}\Phi_{\{x,y\}}(\sigma_{\Lambda}\omega_{\Lambda^c}),
\ee
where the configuration $((\sigma_{\Lambda}\omega_{\Lambda^c})_x)_{x\in \mathbb{Z}^d}$ means
\be
(\sigma_{\Lambda}\omega_{\Lambda^c})_x = 
\begin{cases}
\sigma_x &\text{ if } x\in \Lambda\\
\omega_x &\text{ if } x\notin \Lambda
\end{cases}.
\ee

For each $A \in \mathscr{F}_{\Lambda} $ and $\omega \in \Omega,$ we define the \emph{finite volume Gibbs measure} in $\Lambda$ with boundary condition $\omega$ and inverse temperature $\beta>0$ by 
\be
\mu^{\omega}_{\Lambda, \beta}\left(\sigma \right)=\frac{e^{-\beta H_{\Lambda}^{\omega}(\sigma)}}{Z_{\Lambda,\beta}^{\omega}},
\ee
where $Z_{\Lambda,\beta}^{\omega}$ is the \emph{partition function} given by
\be
Z_{\Lambda,\beta}^{\omega} = \int_{E^{\Lambda}}e^{-\beta H^{\omega}_{\Lambda}(\sigma)}\prod_{x \in \Lambda} \lambda(\diff\sigma_x).
\ee

\subsection{Local Central Limit Theorem}

We say that the random variable $X$ is \emph{lattice distributed} if there exist integer number $a$ and $h>0$ such that all possible values of $X$ may be represented in the form $a+bh$, where the parameter $b$ can assume any integer values. We call $h$ the {\em span of the distribution}. The distribution span $h$ is {\em maximal} if, no matter what the choice of $b\in \mathbb{Z}$ and $h_1>h$, it is impossible to represent all possible values of $X$ in the form $a+bh_1$. In this paper, we will always consider the span of the distribution to be maximal.

Consider $f:\Omega\to \mathbb{Z}$ be a $\scrf_0$-measurable function, where $0$ is the origin of the lattice $\mathbb{Z}^d$. By abuse of notation, we will frequently look $f$ as $f\circ \pi_0$, where $\pi_0:\Omega\to E$ is the projection function defined by $\pi_0(\sigma)=\sigma_0$.
Define
\be
\Var_{\lambda}(f)=\lambda\lf(f-\lambda(f))^2\ri.
\ee
If $f$ is bounded, i.e., $\lVert f\rVert<\infty$, then $\lambda(f)<\infty$ and $\Var_{\lambda}(f)<\infty$. Note that, if $\Var_{\lambda}(f)>0$, then $f$ is not $\lambda$-a.s. constant.
For each $x\in \mathbb{Z}^d$, define $\theta_x:\Omega \to \Omega$ be the {\em shift operator} $(\theta_x\sigma)_y = \sigma_{x+y}$, for all $y\in \mathbb{Z}^d$.

For a finite cube $\Lambda_k\Subset \mathbb{Z}^d$ given by $\Lambda_k=[-k,k]^d$, define, for a fixed $\sigma\in \Omega$,
\be
S_k(f) = \sum_{x\in \Lambda_k}f\circ \theta_x(\sigma) \quad \text{ and }\quad \bar{S}_k(f)=\frac{S_k(f)-\mu^\omega_{\Lambda_k,\beta}(S_k(f))}{\sqrt{D_k}},
\ee
where $D_k=D_k(f)=\mu^\omega_{\Lambda_k,\beta}((S_k(f)-\mu^\omega_{\Lambda_k,\beta}(S_k(f)))^2)$ denotes the variance of $S_k(f)$. For simplicity, we will use the notations $S_k=S_k(f)$ and $\bar{S}_k=\bar{S}_k(f)$.
 
For a sequence of increasing cubes $(\Lambda_k)_{k\ge 1}$ in $\mathbb{Z}^d$, and for a sequence of boundary conditions $(\omega_k)_{k\ge 1}$, 
we say that the $f$-\emph{integral central limit theorem} holds the the sequence $(\mu^{\omega_k}_{\Lambda_k,\beta})_{k\ge 1}$  if the following three conditions are satisfied:
\begin{enumerate}
\item[(i)] $\lim_{k\to \infty}D_k/\abs{\Lambda_k}=L$,
\item[(ii)] $L>0$,
\item[(iii)] For every $\tau\in \mathbb{R}$,
\be
\lim_{k\to \infty}\mu^{\omega_k}_{\Lambda_k,\beta}(\bar{S}_k\le \tau) = \frac{1}{\sqrt{2\pi}}\int_{-\infty}^{\tau} e^{-z^2/2}\diff z.
\ee
\end{enumerate}

Suppose that $f\circ \theta_x$ is a lattice distributed random variable for every $x\in \mathbb{Z}^d$ with the same $a\in \mathbb{Z}$ and maximal $h>0$. Throughout the paper, each time we use $f\circ \theta_x$ is lattice distributed in the statements of the results we mean that we pick the same $a$ and $h$ for every $x \in \mathbb{Z}^d.$ 
If $f$ is bounded, then the possible values for $f$ are $a+bh$ where $b$ is in the set $\{p,p+1,\ldots, q\}$ for some integers $p< q$. In this case, the possible values for $S_k$ are $\abs{\Lambda_k}a+bh$ where $b\in \mathfrak{B}_k:=\{p\abs{\Lambda_k},p\abs{\Lambda_k}+1,\ldots,q\abs{\Lambda_k}\}$. We also define $\mathfrak{B}:=\{p,\ldots,q\}$.

Define
\be 
z_{k,b}=\frac{\abs{\Lambda_k}a+bh-\mu^{\omega_k}_{\Lambda_k,\beta}(S_k)}{\sqrt{D_k}}.
\ee
We say that the $f$-\emph{local central limit theorem} holds for the sequence of Gibbs measures $(\mu^{\omega_k}_{\Lambda_k,\beta})_{k\ge 1}$ if the conditions (i) and (ii) are satisfied, and
\be
\lim_{k\to \infty}\sup_{b\in \mathfrak{B}_k} \abs{\frac{\sqrt{D_k}}{h}\mu^{\omega_k}_{\Lambda_k,\beta}(S_k=\abs{\Lambda_k}a+bh) -\frac{1}{\sqrt{2\pi}} \exp\left(-\frac{z_{k,b}^2}{2} \right) }=0.
\ee
The definition above comes from the version of the Local Limit Theorem for the lattice distributed random variables. For more details, see \cite{Gne}.

{\it Remark 1:} The Local Limit Theorem in Gnedenko's book (Chapter 8 in \cite{Gne}) for the i.i.d. lattice distributed random variables states that it is necessary and sufficient that the distribution span $h$ to be maximal. Note that, in our case, we will assume that $h$ is maximal.

Our main result is the following.

\begin{theorem}\label{main}
Suppose that $\Phi$ is a translation invariant and absolutely summable potential, and $f:\Omega \to \mathbb{Z}$ is a bounded $\scrf_0$-measurable function satisfying $\Var_{\lambda}(f)>0$ and $f\circ \theta_x$ is a lattice distributed random variable for every $x\in \mathbb{Z}^d$. Assume one of the following conditions,
\begin{enumerate}
\item\label{hightemp} the inverse temperature $\beta$ is sufficiently small.
\item\label{samecampanino} the potential $\Phi$ satisfies 
\be\label{campanino}
\sum_{\substack{x\in \mathbb{Z}^d \\ x\neq 0}}\lVert \Phi_{\{x,0\}} \lVert^{1/2} <\infty \quad \text{and}\quad \sup_{\lVert x\rVert \ge r}\lVert \Phi_{\{x,0\}}\rVert>0 \quad \text{for every }r>0.
\ee
\end{enumerate}
If the $f$-integral central limit theorem holds for a given sequence of Gibbs measures $(\mu^{\omega_k}_{\Lambda_k,\beta})_{k\ge 1}$, then the $f$-local central limit theorem holds for the same sequence of Gibbs measures.
\end{theorem}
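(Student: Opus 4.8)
The plan is to follow the classical Fourier-analytic route for local limit theorems, reducing the local statement to control of the characteristic function
\[
\varphi_k(t) := \mu^{\omega_k}_{\Lambda_k,\beta}\!\left(e^{it \bar S_k}\right)
\]
via the inversion formula for lattice-valued sums. Since $S_k$ takes values in $|\Lambda_k| a + h\mathfrak{B}_k$, the probability $\mu^{\omega_k}_{\Lambda_k,\beta}(S_k = |\Lambda_k|a + bh)$ can be written as an integral of $\varphi_k$ over an interval of length $2\pi\sqrt{D_k}/h$, and subtracting the corresponding Gaussian inversion integral reduces the quantity inside the supremum in the $f$-local central limit theorem to
\[
\frac{1}{2\pi}\int_{-\pi\sqrt{D_k}/h}^{\pi\sqrt{D_k}/h}\Bigl(\varphi_k(t) - e^{-t^2/2}\Bigr) e^{-i t z_{k,b}}\,\diff t
\]
plus the Gaussian tail $\frac{1}{2\pi}\int_{|t|>\pi\sqrt{D_k}/h} e^{-t^2/2}\diff t$, which goes to $0$ because $D_k \to \infty$ by conditions (i)--(ii). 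So everything comes down to showing that the first integral tends to $0$ uniformly in $b$, and since $|e^{-itz_{k,b}}|=1$ it suffices to bound $\int |\varphi_k(t) - e^{-t^2/2}|\,\diff t$ over the growing interval $[-\pi\sqrt{D_k}/h, \pi\sqrt{D_k}/h]$.

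Next I would split this interval into two regimes, exactly as the authors announce: a \emph{small-$t$} regime $|t| \le \delta$ (for a fixed small $\delta>0$) and a \emph{large-$t$} regime $\delta \le |t| \le \pi\sqrt{D_k}/h$. In the small-$t$ regime, the integral central limit theorem hypothesis gives pointwise convergence $\varphi_k(t)\to e^{-t^2/2}$ for each fixed $t$ (by Lévy continuity / the standard equivalence of weak convergence and convergence of characteristic functions), and to upgrade this to an integrable bound one needs a uniform Gaussian-type domination $|\varphi_k(t)| \le e^{-ct^2}$ for $|t|\le\delta$ with $c>0$ independent of $k$ — this is where the cluster expansion from Section \ref{ce} enters, providing analyticity and uniform quadratic control of $\log\varphi_k$ near $0$ together with the variance asymptotics $D_k/|\Lambda_k|\to L>0$. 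Then dominated convergence kills the small-$t$ contribution. In the large-$t$ regime one must show $\sup_{\delta\le|t|\le\pi\sqrt{D_k}/h}|\varphi_k(t)|$ decays fast enough — exponentially in $|\Lambda_k|$ — so that, even after multiplying by the length $O(\sqrt{D_k})= O(\sqrt{|\Lambda_k|})$ of the interval, the contribution vanishes; this again relies on the convergent cluster expansion, now used to compare $\varphi_k$ against a product measure and to extract a factor bounded away from $1$ per site, using the lattice-distributed/non-degeneracy hypothesis $\Var_\lambda(f)>0$ and maximality of the span $h$ to ensure $|\mu_\lambda(e^{itf})|<1$ strictly on the compact set $\delta \le |t'| \le \pi/h$ (the relevant rescaled frequencies). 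This is the step that genuinely uses either condition (\ref{hightemp}) or (\ref{samecampanino}): the two hypotheses are precisely the two different ways to guarantee absolute convergence of the cluster expansion — small $\beta$ via the high-temperature estimate adapted from \cite{FV}, Chapter 5, versus the summability condition (\ref{campanino}) via the construction from \cite{CCT}.

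The main obstacle I expect is the large-$t$ estimate, i.e. proving the exponential-in-volume decay of $|\varphi_k(t)|$ uniformly over $\delta \le |t| \le \pi\sqrt{D_k}/h$. The subtlety is twofold: first, the Gibbs measure is not a product measure, so one cannot simply factorize $\varphi_k$ into single-site characteristic functions; the cluster expansion must be invoked to write $\log Z^{\omega_k}_{\Lambda_k,\beta}(t)$ (the log of the partition function twisted by $e^{itf(\theta_x\sigma)/\sqrt{D_k}}$ summed over $x$) as a sum of polymer weights, and then one must show that the "decoupled" leading term already gives decay like $|\mu_\lambda(e^{i(t/\sqrt{D_k})f})|^{|\Lambda_k|}$ while the polymer corrections only perturb the per-site rate by an amount that stays strictly negative. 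Second, there is a boundary-of-the-interval issue: as $t$ approaches $\pi\sqrt{D_k}/h$ the rescaled frequency $t/\sqrt{D_k}$ ranges over all of $[0,\pi/h]$, including the endpoint where $|\mu_\lambda(e^{i(\pi/h)f})|$ could a priori equal $1$ — but maximality of the span $h$ rules this out, so a compactness argument yields a uniform gap $\sup_{\delta/\sqrt{D_k}\le |s|\le \pi/h}|\mu_\lambda(e^{isf})| \le 1 - \eta$ for some $\eta>0$; making this quantitative and feeding it through the cluster expansion, while tracking that the error terms are $o(|\Lambda_k|)$ in the exponent, is the technical heart of the argument. The small-$t$ regime, by contrast, is comparatively routine once the uniform quadratic bound on $\log\varphi_k$ is in hand, which itself is a fairly direct consequence of the second-derivative-of-the-pressure computations that the cluster expansion makes legitimate.
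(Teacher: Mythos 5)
Your overall Fourier-inversion framework matches the paper's, but there is a genuine gap in your treatment of the ``large-$t$'' regime: you claim exponential-in-$\abs{\Lambda_k}$ decay of $\abs{\varphi_k(t)}$ uniformly over $\delta\le\abs{t}\le\pi\sqrt{D_k}/h$, justified by a ``uniform gap'' $\sup_{\delta/\sqrt{D_k}\le\abs{s}\le\pi/h}\abs{\mu_\lambda(e^{isf})}\le 1-\eta$. No such gap exists: the lower endpoint $\delta/\sqrt{D_k}$ of the rescaled frequency range tends to $0$ as $k\to\infty$ (since $D_k\sim L\abs{\Lambda_k}$), and $\abs{\mu_\lambda(e^{isf})}\to 1$ as $s\to 0$, so compactness gives a gap only on $\delta'\le\abs{s}\le\pi/h$ with $\delta'>0$ fixed, i.e.\ only for $\abs{t}\ge\delta'\sqrt{D_k}$. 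Indeed your claimed bound is impossible as stated: at $\abs{t}=\delta$ the integral CLT hypothesis itself forces $\varphi_k(\delta)\to e^{-\delta^2/2}>0$, which is a constant, not exponentially small in the volume. This is precisely why the paper (following \cite{CCT} and the classical lattice local limit theorem scheme) uses a \emph{three}-way split, as in (\ref{diff}): a fixed interval $[-B,B]$ handled by the CLT; an intermediate regime $B<\abs{t}<\delta\sqrt{D_k}$, where the cluster expansion yields only the Gaussian-type bound $\abs{\varphi_k(t)}\le\exp(-t^2D\abs{\Lambda_k}/D_k)$ of Lemma \ref{lem22} (resp.\ Lemma \ref{lemcam1}), integrable and small for $B$ large; and the extreme regime $\delta\sqrt{D_k}\le\abs{t}\le\pi\sqrt{D_k}/h$, where the per-site gap from Proposition \ref{prop1} (using lattice distribution and maximal span) combined with the expansion gives the genuinely exponential-in-volume bound of Lemma \ref{lem23} (resp.\ Lemma \ref{lemcam2}), which beats the interval length $O(\sqrt{D_k})$. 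Your two-regime split merges the intermediate and extreme regimes and so has no valid estimate on $B\le\abs{t}\le\epsilon\sqrt{D_k}$.

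Two smaller points. First, on your ``small-$t$'' interval a fixed compact interval needs no Gaussian domination at all: the integrand is bounded by $2$, so the CLT plus bounded convergence suffices there (this is how the paper treats $[-B,B]$); the quadratic bound is needed instead on the growing intermediate region. Second, for condition (\ref{samecampanino}) the paper does not simply rerun the high-temperature expansion with the \cite{CCT} convergence criterion: since $\beta$ is arbitrary there, it first decimates to the sublattice $\mathbb{Z}^d(r_0)$ and conditions on the complementary spins via the spatial Markov property, taking $r_0$ large so that the polymer weights are small; your sketch omits this step, without which the expansion need not converge at the given temperature.
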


The condition (\ref{samecampanino}) of Theorem \ref{main} is an extension of the result in \cite{CCT}, in the sense that we extend the result for a general two-body potentials and a general state space provided with a finite measure. Note that there are absolutely summable potentials for which the condition (\ref{campanino}) fails, for instance, the long-range Ising model (see Section \ref{application}). We are able to show that the result in \cite{CCT} is still true when we assume sufficiently high temperatures.

{\it Remark 2}: 
For the many-body interaction potential case, we believe that the absolute summability condition should be replaced with the condition in the respective norm for the convergence of the corresponding cluster expansion (see \cite{FV,Geo}).
 We expect that Theorem \ref{main} holds true at sufficiently high temperatures for the many-body interaction potential.
 
{\it Remark 3}: Dobrushin-Tirozzi \cite{DT} showed that if the short-range potential is bounded and translation-invariant, then given that CLT holds one can prove LLT as well. Note that their result holds for every temperature. We also expect that our result can be extended to any temperature.
 
 
  {\it Remark 4}: In the case when the model has a second order phase transition, we are uncertain to conclude Local Central Limit Theorem at the critical temperature. For instance, the 2D nearest-neighbor ferromagnetic Ising model has a second order phase transition (the result is true for every $d\ge 2$, see \cite{ADCS,AF,Yang}). However, at the critical temperature $T_c$, the susceptibility diverges. Thus, we cannot apply the result in \cite{New} to show the validity of the Central Limit Theorem. For the 2D Ising model with short-range interactions it is known \cite{Hil} that at the critical temperature, the CLT of another natural quantity that is the magnetization gives a non-Gaussian distribution in the limit compared with the off-critical situations where the CLT gives a Gaussian distribution.

As in \cite{CCT}, to obtain the main result, we use the following estimate
\begin{align}\label{diff}
&\sup_{b\in \mathfrak{B}_k} 2\pi\abs{\frac{\sqrt{D_k}}{h}\mu^{\omega_k}_{\Lambda_k,\beta}(S_k=\abs{\Lambda_k}a+bh) -\frac{1}{\sqrt{2\pi}} \exp\left(-\frac{z_{k,b}^2}{2} \right) }\\
&\leq \int_{-B}^{B}\abs{\mu^{\omega_k}_{\Lambda_k,\beta}\left(\exp \left(i t \bar{S}_{k}\right)\right)-\exp \left(-t^{2} / 2\right)} \diff t\nonumber \\
&\quad +\int_{\abs{t} \geq B} \exp \left(-t^{2} / 2\right) \diff t+\int_{B < \abs{t} < \delta \sqrt{D_{k}}} \abs{ \mu^{\omega_k}_{\Lambda_k,\beta}\left(\exp (i t \bar{S}_{k})\right)} \diff t\nonumber \\
&\quad+\int_{\delta \sqrt{D_{k}} \leq \abs{t} \leq \frac{\pi}{h} \sqrt{D_{k}}}\abs{\mu^{\omega_k}_{\Lambda_k,\beta}\left(\exp \left(i t \bar{S}_{k}\right)\right)} \diff t \nonumber.
\end{align}

If $B$ is large enough, the first integral is small by the Central Limit Theorem, and the second integral is also small. The third and fourth integrals follow from Lemma \ref{lem22} and \ref{lem23} for Theorem \ref{main} condition (\ref{hightemp}), and from Lemma \ref{lemcam1} and \ref{lemcam2} for condition (\ref{samecampanino}). All these lemmas will be proved in Section \ref{proofofthe mainresult}. The proofs are adaptation of Lemma 2.2 and 2.3 in \cite{CCT}. 

\subsection{Applications}\label{application}

Before the proof of Theorem \ref{main}, let us present some applications.

\subsubsection{Long-Range Ising Model}

Let $\Omega=\{-1,1\}^{\mathbb{Z}}$ be the set of configurations $\sigma=(\sigma_x)_{x\in \mathbb{Z}}$ on $\mathbb{Z}$. The Hamiltonian in a finite set $\Lambda$ with boundary condition $\omega$  is given by
\be \label{Hamil}
H^{\omega}_{\Lambda}(\sigma)=-\sum_{\substack{\{x,y\} \subset \Lambda \\ x\neq y}}J(\abs{x-y})\sigma_x\sigma_y - \sum_{\substack{x\in \Lambda \\ y\notin \Lambda}}J(\abs{x-y})\sigma_x\omega_y,
\ee
where the coupling constants $J_{x,y}=J(\abs{x-y})$, with $x\neq y$, are defined by
\be\label{coupling}
J(\abs{x-y})=
\begin{cases}
J &\text{ if }\abs{x-y}=1\\
\abs{x-y}^{-2+\alpha} &\text{ if }\abs{x-y}>1
\end{cases}
\ee
where $J(1)=J>0$ and $0\leq \alpha < 1$.
 It is known that these models undergo a phase transition at low temperatures \cite{Dys, FrSp}, they satisfy FKG inequality \cite{FKG}  and have finite susceptibility at high temperatures \cite{ACCN}.

As a first main application we obtain the Local Central Limit Theorem in the case of the long-range Ising model defined above. The fact that the Central Limit Theorem holds in the case of the long-range Ising model follows from \cite{New}.
In \cite{New}, it is proved that for models satisfying FKG inequality and that have finite susceptibility, the Central Limit Theorem holds for not necessarily monotonic functions of the random variables. This result extends the work in \cite{Newold} where the functions of the random variables are assumed to be monotonic. 

The Local Central Limit Theorem for these models at high temperatures is obtained as an application of our main result for $E=\{-1,1\}$ and $f(\sigma)=\sigma_0$. Moreover, note that the sum in (\ref{campanino}) diverges for this model.

\subsubsection{Gibbs Fields in a Compact Metric Space}

Under the Dobrushin's Uniqueness Condition \cite{Dob}, the work in \cite{Kun} proves that the Central Limit Theorem holds true for Gibbs fields with spins taking values in a compact metric space. In addition, in \cite{Kun} the assumption of the finite range of the potential previously considered in \cite{DT} is discarded. 

Thus, in our case as a second application, using our results we obtain that, at sufficiently high temperatures, the Local Central Limit Theorem holds true for the models considered in \cite{Kun}. These models extend the work in \cite{CCT} that treats only the case of discrete values for the spins, as well as the work in \cite{DT}, where a Local Central Limit Theorem is obtained under the assumption on the finite-range of the potential.

\section{Cluster Expansion at High Temperatures}\label{ce}

In this section, we are going to develop the cluster expansion at sufficiently high temperatures to control the absolute value of the characteristic function $\abs{\mu^{\omega_k}_{\Lambda_k,\beta}\left(\exp \left(i t \bar{S}_{k}\right)\right)}$ (see (\ref{diff})). The polymers and activity functions are similar with the ones in \cite{CCT}, and the proof of the absolutely convergence is an adaptation of the result in \cite{FV}, Chapters 5 and 6.

First, note that, for every $t\in \mathbb{R},$
\be\label{removebar}
\abs{\mu^{\omega_k}_{\Lambda_k,\beta}\left(\exp \left(i t \bar{S}_{k}\right)\right)}
= \abs{\mu^{\omega_k}_{\Lambda_k,\beta}\left(\exp \left(\frac{it}{{\sqrt{D_k}}} S_{k}\right)\right)}.
\ee
Define
\be
Z_{\Lambda_k, \beta,t}^{\omega} = \int_{E^{\Lambda_k}}e^{-\beta H^{\omega}_{\Lambda_k}(\sigma) +\frac{it}{{\sqrt{D_k}}} S_k}\prod_{x\in \Lambda_k} \lambda(\diff \sigma_x).
\ee
Note that $Z_{\Lambda_k,\beta,0}^{\omega}=Z_{\Lambda_k,\beta}^{\omega}$.

For all $x,y\in \mathbb{Z}^d$, since the function $\Phi_{\{x,y\}}$ is $\scrf_{\{x,y\}}$-measurable, by abuse of notation, we will start writing $\Phi_{\{x,y\}}(\sigma)=\Phi_{\{x,y\}}(\sigma_x,\sigma_y)$.

For a fixed $x\in \Lambda_k$, define 
\be
h_x^{\omega}(\sigma_x)=h_{x,\Lambda_k}^{\omega}(\sigma_x):=\sum_{y \notin \Lambda_k}\Phi_{\{x,y\}}(\sigma_x, \omega_y).
\ee 

For each $x\in \Lambda_k$ and $\omega \in \Omega$, define the probability density function $p^{\omega}_x:E\to \mathbb{R}$ by
\begin{equation}\label{probp}
p^{\omega}_x(\sigma_x) = p^{\omega}_{x,\Lambda_k,\beta}(\sigma_x) := \frac{\exp(-\beta h^{\omega}_x(\sigma_x))}{\displaystyle\int_{E} \exp(-\beta h^{\omega}_x(\eta_x))\lambda(\diff\eta_x)},
\end{equation}
and denote $\E^{\omega}_x$ be the expectation with respect to $p^{\omega}_x$.

Let $\mathcal{P}_{1,2}$ be a family of non-empty subsets $b\Subset \mathbb{Z}^d$, consisting of at most two points. A \emph{polymer} $R$ is a set $\{b_1,\ldots,b_p\}$ of elements $b_i \in \mathcal{P}_{1,2}$ that is connected in the following sense: for any $b_l,b_m \in R$, there exist $b_{k_1},\ldots, b_{k_q} \in R$ such that $b_{k_1}=b_l$, $b_{k_q}=b_m$ and $b_{k_j}\cap b_{k_{j+1}}\neq \emptyset$. Let $\mathcal{R}$ be the set of all polymers and, if $R\in \mathcal{R}$, denote by $\tilde{R}$ the subset of $\mathbb{Z}^d$ given by $\tilde{R} = \bigcup_{b\in R}b$. 

For $t\in \mathbb{R}$, define the activity function $\zeta^t_{\beta}: \mathcal{R}\to \mathbb{C}$ by
\begin{equation}\label{zetat}
\zeta^t_{\beta}(R):=\int_{E^{\tilde{R}}}\prod_{x \in \tilde{R}}p^{\omega}_x(\sigma_x) \prod_{b \in R}\xi_{\beta,b}(\sigma) \prod_{x\in \tilde{R}}\lambda(\diff \sigma_x),
\end{equation}
where we introduce the notations
\begin{align*}
\xi_{\beta,\{x\}}(\sigma) &= \xi^t_{\beta,\{x\}}(\sigma):=\exp\left( \frac{itf\circ \theta_x(\si)}{\sqrt{D_k}} \right)-1, \\ 
\xi_{\beta,\{x,y\}}(\sigma)&:=\exp(-\beta \Phi_{\{x,y\}}(\sigma_x, \sigma_y))-1.
\end{align*}

Let $\mathcal{P}_{i}$ be a family of non-empty subsets $b\Subset \mathbb{Z}^d$, consisting of $i$ points, and $\mathcal{P}_{i}(\Lambda_k)\subset \mathcal{P}_{i}$ be the subset of $ \mathcal{P}_{i}$ when $b\subset \Lambda_k$.
Define $\mathcal{R}_2$ to be the set of polymers $R\subset \mathcal{P}_2$. If $t=0$, we denote $\zeta_{\beta}(R):=\zeta^0_{\beta}(R)$ for every $R\in \mathcal{R}_2$.

For a polymer $R\in \mathcal{R}$, define $\gamma^i_R \subset R$ to be the set of all elements of $R$ with cardinality $i$. 
Note that $\tilde{\gamma}^1_R \cup\tilde{\gamma}^2_R=\tilde{R}$ and $\tilde{\gamma}^1_R \subset \tilde{\gamma}^2_R$ if $\abs{\tilde{R}}\ge 2$. Define, for connected $\gamma \subset \mathcal{P}_2(\Lambda_k)$,
\begin{equation}\label{hatzeta}
\hat{\zeta}_{\beta}(\gamma)=\int_{E^{\tilde{\gamma}}}\prod_{x \in \tilde{\gamma}}p^{\omega}_{x}(\sigma_x)\prod_{\{x, y\} \in \gamma}\abs{\xi_{\beta,\{x,y\}}(\sigma_x, \sigma_y)}\prod_{x \in \tilde{\gamma}}\lambda(\diff \sigma_x),
\end{equation}
and $\hat{\zeta}_{\beta}(\emptyset)=1$. For $n\ge 1$, define $\mathcal{L}^n\subset \mathcal{R}^n$ be the set of ordered $n$-tuples $(R_1,\ldots,R_n)$ such that there exists $1\le i\le n $ satisfying $\abs{\tilde{R}_i}>1$. 
 
For an ordered $n$-tuple $(R_1,\ldots,R_n)\in \mathcal{R}^n$, define the \emph{Ursell function} 
$$
\phi^T(R_1,\ldots, R_n) =
\begin{cases}
1 &\text{ if } n=1\\
\displaystyle\sum_{\substack{G \subset \mathcal{G}_{\{R_1,\ldots,R_n\}} \\ G \ \text{conn. spann.}}}\frac{(-1)^{e(G)}}{n!} &\text{ if } n\ge 2,\ \mathcal{G}_{\{R_1,\ldots,R_n\}}\text{ conn.}\\
0 &\text{ if } n\ge 2,\ \mathcal{G}_{\{R_1,\ldots,R_n\}}\text{ not conn.}
\end{cases}
$$
where $\mathcal{G}_{\{R_1,\ldots,R_n\}}$ is the graph of vertices $\{1,\ldots,n \}$ and edges
\be\label{edges}
\{\{i,j\}: \tilde{R}_i\cap \tilde{R}_j \neq \emptyset, 1\le i,j\le n, i\neq j\},
\ee
and $G$ ranges over all its connected spanning subgraphs. We denote $e(G)$ be the number of edges in $G$.

\begin{proposition}\label{theoremlemma1}
Suppose that $\Phi$ is a translation invariant and absolutely summable potential, and $f:\Omega \to \mathbb{Z}$ is a bounded $\scrf_0$-measurable function satisfying $\Var_{\lambda}(f)>0$. For a fixed $C\in (0,e^{-1})$ and $\beta>0$, there exist $a_\beta=a(\beta,C)$ and $\beta_C>0$ such that, for every $\beta<\beta_C$,
\be\label{etathm2}
\sum_{n=1}^{\infty} \sum_{(R_1,\ldots,R_n)\in \mathcal{L}^n} \abs{\phi^T(R_1,\ldots,R_n)} \prod_{i=1}^n C^{\abs{\tilde{\gamma}^1_{R_i}}}\hat{\zeta}_{\beta}(\gamma^2_{R_i}) \le a_{\beta} \abs{\Lambda_k}.
\ee
\end{proposition}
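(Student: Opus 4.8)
The plan is to adapt the standard high-temperature cluster-expansion convergence criterion (the Kotecký–Preiss / Fernández–Velenik bound, cf.\ \cite{FV}, Chapter 5) to the present weighted polymer sum, where the weight of a polymer $R$ is $w(R) := C^{\abs{\tilde\gamma^1_R}}\hat\zeta_\beta(\gamma^2_R)$ rather than $\zeta^t_\beta(R)$ itself. The first step is a bookkeeping reduction: every polymer $R \in \mathcal L^n$ with $\abs{\tilde R} > 1$ decomposes as a disjoint union of $1$-element bonds (governed by the factor $C$) and $2$-element bonds (governed by $\hat\zeta_\beta$), so the weight factorises over the bond structure, and the connectivity graph $\mathcal G$ on $\{1,\dots,n\}$ is built from the overlap relation (\ref{edges}). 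This lets one treat the sum in (\ref{etathm2}) as a polymer partition-function-type sum over a gas of polymers whose single-polymer weight we must bound.

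Second, I would establish the key per-site smallness estimate: there is a constant $a(\beta)$, vanishing as $\beta \to 0$, such that for every fixed $x \in \mathbb Z^d$,
\be
\sum_{R \ni x}\, w(R)\, e^{\abs{\tilde R}} \le a(\beta),
\ee
where the sum is over polymers containing $x$ in $\tilde R$. The two-body bonds contribute, via the definition (\ref{hatzeta}) and $\abs{e^{-\beta\Phi_{\{x,y\}}}-1}\le e^{\beta\lVert\Phi_{\{x,y\}}\rVert} - 1 \le \beta\lVert\Phi_{\{x,y\}}\rVert e^{\beta\lVert\Phi_{\{x,y\}}\rVert}$, a geometric-series bound controlled by $\beta\vertiii{\Phi}$ thanks to absolute summability (\ref{normphi}); here $p^\omega_x$ is a genuine probability density, so integrating it out only costs a factor $1$. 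The one-body bonds contribute the factor $C \in (0,e^{-1})$, which is already a free small parameter independent of $\beta$, and $\lambda(E)<\infty$ together with boundedness of $f$ keeps everything finite. Choosing $\beta_C$ small enough that $\beta\vertiii{\Phi}$ is small, and using $C < e^{-1}$, makes the left side as small as we like; this is the place where both hypotheses $\lambda(E)<\infty$ and "$\beta$ sufficiently small" are genuinely used.

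Third, with this estimate in hand, the Fernández–Velenik convergence theorem (or equivalently the classical tree-graph / Penrose bound on the Ursell function $\phi^T$) gives directly
\be
\sum_{n=1}^{\infty} \frac{1}{n!}\sum_{\substack{(R_1,\dots,R_n) \\ \text{overlap-connected, some } \abs{\tilde R_i}>1}} \abs{\phi^T(R_1,\dots,R_n)}\prod_{i=1}^n w(R_i) \le \sum_{x \in \Lambda_k} \Big(\sum_{R \ni x} w(R)\, e^{\abs{\tilde R}}\Big) \le a(\beta)\abs{\Lambda_k},
\ee
which is exactly (\ref{etathm2}) with $a_\beta = a(\beta)$; the restriction to $\mathcal L^n$ (at least one polymer with more than one point) is harmless since it only shrinks the sum, and it is what the rest of the paper needs to pull the characteristic-function estimate out of the expansion. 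The routine $e^{\abs{\tilde R}}$ versus a milder exponential weight is absorbed by shrinking $\beta_C$ further.

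The main obstacle is the second step, specifically getting the combined one-body/two-body weight summed against $e^{\abs{\tilde R}}$ to be uniformly small: one must handle polymers that mix many two-body bonds (long-range, controlled only in the $\vertiii{\cdot}$ norm, not in any stronger $\ell^{1/2}$-type norm) with the $f$-insertion one-body bonds, and show that the combinatorial proliferation of connected polymers is beaten by the product of small activities. This is where the argument diverges from \cite{CCT}: instead of their walk-from-polymer construction (which needs the $\lVert\Phi\rVert^{1/2}$ summability of (\ref{campanino})), the high-temperature smallness of $\beta\vertiii{\Phi}$ does the job via the Fernández–Velenik criterion, and $C<e^{-1}$ handles the entropy of the $1$-point bonds. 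Once the per-site bound is nailed down, the passage to (\ref{etathm2}) is a direct quotation of the cluster-expansion convergence theorem.
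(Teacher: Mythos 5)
Your overall framework coincides with the paper's: view $w(R)=C^{\abs{\tilde\gamma^1_R}}\hat{\zeta}_{\beta}(\gamma^2_R)$ as a polymer activity, verify the Koteck\'y--Preiss/Fern\'andez--Velenik criterion by splitting each polymer into its $1$-point bonds (cost $C$) and $2$-point bonds (cost controlled by $\beta$ via absolute summability), and then quote the convergence theorem of \cite{FV}. But your second step contains a false claim: the per-site sum $\sum_{R\ni x}w(R)e^{\abs{\tilde R}}$ cannot be made ``as small as we like'' by taking $\beta$ small, because the singleton polymer $R=\{\{x\}\}$ alone contributes $Ce$, independently of $\beta$. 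What is actually true, and what the paper proves, is that this sum is at most $(Ce+a_{\beta})$ with $a_{\beta}=(1+C)^2e^{4}\sup_{x}\sum_{y\neq x}\lVert e^{-\beta\Phi_{\{x,y\}}}-1\rVert$ controlling all polymers containing a $2$-point bond (the factor $(1+C)^{\abs{\tilde\gamma^2}}$ arising from summing $\gamma^1\subset\tilde\gamma^2$), and this is $<1$ precisely because $C<e^{-1}$ and $\beta<\beta_C$; smallness of the whole per-site sum is neither true nor needed for the criterion.

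The more serious gap is in your third step, where you bound the $\mathcal{L}^n$-restricted cluster sum by $\sum_{x}\sum_{R\ni x}w(R)e^{\abs{\tilde R}}$ and dismiss the restriction to $\mathcal{L}^n$ as ``harmless since it only shrinks the sum.'' The restriction is exactly what makes the constant in front of $\abs{\Lambda_k}$ small: with your bound you can only obtain $a_{\beta}\ge Ce$, which does not tend to $0$ as $\beta\to 0$ and is useless for the way the proposition is used later (in Theorem~\ref{clusterexpansionthm} one needs $\alpha_{\delta,\beta}=A(\delta)+a_{\beta}\to 0$, and in Lemma~\ref{lem22} positivity of $D$ requires $(\delta\lVert f\rVert)^{-1}a_{\beta}$ to be small, which fails if $a_{\beta}$ is of order $C=4\delta\lVert f\rVert$). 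The paper's proof instead applies Theorem 5.4 of \cite{FV} to get the pinned-cluster bound $1+\sum_{n\ge 2}n\sum_{(R_2,\ldots,R_n)}\abs{\phi^T(R_1,\ldots,R_n)}\prod_{i\ge 2}w(R_i)\le e^{\abs{\tilde R_1}}$ and then resums only over the distinguished polymer $R_1$ with $\abs{\tilde R_1}\ge 2$, whose weighted sum per unit volume is $a_{\beta}$, vanishing as $\beta\to 0$; this is where membership in $\mathcal{L}^n$ is used essentially, and it is the step missing from your argument. Two smaller points: your displayed bound carries an extra $1/n!$, which double counts since the paper's $\phi^T$ already contains the $1/n!$; and $\lambda(E)<\infty$ is not what keeps this particular proposition finite (the $p^{\omega}_x$ are probability densities), it is needed elsewhere, e.g.\ for the lower bound in Proposition~\ref{prop1}.
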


\begin{proof}
For a fixed polymer $R_0\in \mathcal{R}$,
\be\label{unir}
\sum_{ x\in \tilde{R}_0}C^{\abs{\tilde{\gamma}^1_{\{x\}}}}\hat{\zeta}_{\beta}(\gamma^2_{\{x\}})= C \abs{\tilde{R}_0}
\ee
and
\begin{align}
\sum_{\substack{R: \tilde{R} \cap \tilde{R}_0 \neq \emptyset \\ \abs{\tilde{R}}\ge 2}}C^{\abs{\tilde{\gamma}^1_R}}\hat{\zeta}_{\beta}(\gamma^2_R)e^{\abs{\tilde{\gamma}^2_R}}
&=\sum_{\gamma^2 : \tilde{\gamma}^2 \cap \tilde{R}_0\neq \emptyset} \hat{\zeta}_{\beta}(\gamma^2)e^{\abs{\tilde{\gamma}^2}}\sum_{\gamma^1: \tilde{\gamma}^1 \subset \tilde{\gamma}^2}C^{\abs{\tilde{\gamma}^1}}\nonumber\\
&= \sum_{\gamma^2: \tilde{\gamma}^2\cap\tilde{\gamma}^2_{R_0} \neq \emptyset}[(1+C) e]^{\abs{\tilde{\gamma}^2}}\hat{\zeta}_{\beta}(\gamma^2). \label{ineqhat}
\end{align}
Define
\begin{equation}
a_{\beta}:=(1+C)^2 e^{4}\sup_{x \in \mathbb{Z}^d}\sum_{y \neq x}\lVert e^{-\beta \Phi_{\{x, y\}}}-1\rVert.
\end{equation}
Since the potential $\Phi$ is absolutely summable and $Ce<1$, there exists $\beta_C >0$ such that, for all $\beta <\beta_C$, we have $Ce+a_{\beta} < 1$.

By the same argument as in \cite{FV} (Lemma 6.99), for every $\beta<\beta_C$,
\begin{equation}
\max_{z \in \Lambda_k}\sum_{\gamma^2: z \in \tilde{\gamma}^2}[(1+C) e]^{\abs{\tilde{\gamma}^2}}\hat{\zeta}_{\beta}(\gamma^2)\le a_{\beta}.
\end{equation}

Thus,
\be
\sum_{R: \tilde{R} \cap \tilde{R}_0 \neq \emptyset}C^{\abs{\tilde{\gamma}^1_R}}\hat{\zeta}_{\beta}(\gamma^2_R)e^{\abs{\tilde{R}}} \le (Ce+a_{\beta})\abs{\tilde{R}_0}< \abs{\tilde{R}_0}.
\ee
Applying Theorem 5.4 in \cite{FV}, we have
\be
1+\sum_{n=2}^{\infty}n\sum_{(R_2,\ldots, R_n)}\abs{\phi^T(R_1,\ldots,R_n)} \prod_{i=2}^n C^{\abs{\tilde{\gamma}^1_{R_i}}}\hat{\zeta}_{\beta}(\gamma^2_{R_i})\le e^{\abs{\tilde{R}_1}},
\ee
where the second sum is over all ordered $(n-1)$-tuple $(R_2,\ldots,R_n)\in \mathcal{R}^{n-1}$ of polymers such that $\tilde{R}_i\subset \Lambda_k$ for all $2\le i\le n$.
Therefore,
\begin{align*}
&\sum_{n=1}^{\infty} \sum_{(R_1,\ldots,R_n)\in \mathcal{L}^n} \abs{\phi^T(R_1,\ldots,R_n)} \prod_{i=1}^n C^{\abs{\tilde{\gamma}^1_{R_i}}}\hat{\zeta}_{\beta}(\gamma^2_{R_i})\\
&=\sum_{\substack{R_1\in\mathcal{R} \\ \abs{\tilde{R}_1}\ge 2}}\!\! C^{\abs{\tilde{\gamma}^1_{R_1}}}\hat{\zeta}_{\beta}(\gamma^2_{R_1})\lf 1+\sum_{n=2}^{\infty}n \!\! \sum_{(R_2,\ldots,R_n)}\abs{\phi^T(R_1,\ldots,R_n)} \prod_{i=2}^n C^{\abs{\tilde{\gamma}^1_{R_i}}}\hat{\zeta}_{\beta}(\gamma^2_{R_i})  \ri\\
&\le a_{\beta}\abs{\Lambda_k},
\end{align*}
as we desired.
\end{proof}

The next theorem shows that the partition function $Z^{\omega}_{\Lambda_k,\beta,t}$ can be written as a polymer partition function $\Xi^{\omega}_{\Lambda_k, \beta,t}$ with activity function $\zeta^t_{\beta}$. Moreover, for $\abs{t}<\delta\sqrt{D_k}$, the function $\log \Xi^{\omega}_{\Lambda_k, \beta,t}$ can  be expressed as an absolutely convergent series when the temperature is sufficiently large.

\begin{theorem}\label{clusterexpansionthm}
Suppose that $\Phi$ is a translation invariant and absolutely summable potential, and $f:\Omega \to \mathbb{Z}$ is a bounded $\scrf_0$-measurable function satisfying $\Var_{\lambda}(f)>0$.
For every $k\ge 1$, $t\in \mathbb{R}$, $\beta>0$, and $w\in \Omega$, the partition function $Z^{\omega}_{\Lambda_k,\beta,t}$ can be written as
\begin{equation}\label{campanino2}
Z_{\Lambda_k, \beta,t}^{\omega}=\left(\prod_{x \in \Lambda_k}\int_{E} e^{-\beta h_x^{\omega}(\sigma_x)} \lambda(\diff \si_x)\right)\Xi^{\omega}_{\Lambda_k, \beta,t},
\end{equation}
where
\begin{equation}
\Xi^{\omega}_{\Lambda_k, \beta,t}:=1+\sum_{n=1}^\infty \sum_{\substack{\{R_1,\ldots, R_n\}\\ \tilde{R}_i \cap \tilde{R}_j =\emptyset,i \neq j}}\prod_{i=1}^n\zeta^t_{\beta}(R_i).
\end{equation}



Moreover, there exists $\delta_0>0$ such that, for every $0<\delta<\delta_0$, there exists $\beta_{\delta}>0$ such that, for every $\beta<\beta_{\delta}$, if $0<\abs{t}<\delta\sqrt{D_k}$, there exists $\alpha_{\delta,\beta}<1$
satisfying
\be\label{velenik2}
\sum_{n=1}^{\infty} \sum_{(R_1,\ldots,R_n)\in \mathcal{R}^n} \abs{\phi^T(R_1,\ldots,R_n)} \prod_{i=1}^n \abs{\zeta^t_{\beta}(R_i)} \le \alpha_{\delta,\beta}\abs{\Lambda_k}.
\ee
For $t=0$, there exists $\beta_0>0$ such that, for every $\beta<\beta_0$, there exists $\alpha_{\beta}<1$ satisfying
 \be\label{tzero}
\sum_{n=1}^{\infty} \sum_{(R_1,\ldots,R_n)\in \mathcal{R}^n_2} \abs{\phi^T(R_1,\ldots,R_n)} \prod_{i=1}^n \abs{\zeta_{\beta}(R_i)} \le \alpha_{\beta}\abs{\Lambda_k}.
 \ee
 
Moreover, both $\alpha_{\delta,\beta}$ and $\alpha_{\beta}$ decrease to 0 when $\delta\to 0$ and $\beta\to 0$.

Under the above conditions, for every $\abs{t}<\delta \sqrt{D_k}$, the function $\Xi^{\omega}_{\Lambda_k,\beta,t}$ can be expressed as the following,
\be\label{cluster}
\Xi^{\omega}_{\Lambda_k,\beta,t} = \exp\left( \sum_{n=1}^{\infty} \sum_{(R_1,\ldots,R_n)} \phi^T(R_1,\ldots,R_n) \prod_{i=1}^n \zeta^t_{\beta}(R_i) \right),
\ee
where the second sum in (\ref{cluster}) is over $\mathcal{R}^n$ for $0<\abs{t}<\delta\sqrt{D_k}$, and over $\mathcal{R}^n_2$ for $t=0$.
\end{theorem}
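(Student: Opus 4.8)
The plan is to argue in three stages: (i) establish the algebraic identity~(\ref{campanino2}) by a Mayer-type expansion of the perturbed partition function; (ii) bound the polymer activities $\zeta^{t}_\beta(R)$ and verify a Koteck\'y--Preiss convergence criterion; (iii) deduce from the standard cluster-expansion machinery of~\cite{FV}, Chapter~5, the convergence estimates~(\ref{velenik2})--(\ref{tzero}), the representation~(\ref{cluster}), and the decay of $\alpha_{\delta,\beta}$ and $\alpha_{\beta}$.

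\emph{Stage (i).} Writing the Hamiltonian as a pair-sum plus the external-field part $\sum_{x\in\Lambda_k}h_x^{\omega}(\sigma_x)$, one has
\[
e^{-\beta H^{\omega}_{\Lambda_k}(\sigma)+\frac{it}{\sqrt{D_k}}S_k}
=\Big(\prod_{x\in\Lambda_k}e^{-\beta h^{\omega}_x(\sigma_x)}\Big)\,
\prod_{x\in\Lambda_k}\big(1+\xi^{t}_{\beta,\{x\}}(\sigma)\big)\prod_{\{x,y\}\subset\Lambda_k}\big(1+\xi_{\beta,\{x,y\}}(\sigma)\big),
\]
since $1+\xi^{t}_{\beta,\{x\}}=e^{itf\circ\theta_x/\sqrt{D_k}}$ and $1+\xi_{\beta,\{x,y\}}=e^{-\beta\Phi_{\{x,y\}}}$. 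I then replace $e^{-\beta h^{\omega}_x(\sigma_x)}$ by $p^{\omega}_x(\sigma_x)\int_E e^{-\beta h^{\omega}_x(\eta_x)}\lambda(\diff\eta_x)$ via~(\ref{probp}), pull the factors $\int_E e^{-\beta h^{\omega}_x}\lambda(\diff\sigma_x)$ out of the integral, expand $\prod_{b\in\mathcal P_{1,2}(\Lambda_k)}(1+\xi_{\beta,b})=\sum_{\mathcal B\subset\mathcal P_{1,2}(\Lambda_k)}\prod_{b\in\mathcal B}\xi_{\beta,b}$, and decompose each $\mathcal B$ into its connected components (polymers) $R_1,\dots,R_n$ with pairwise disjoint supports. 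The integral over $E^{\Lambda_k}$ factorizes: a site outside $\bigcup_i\tilde R_i$ contributes $\int_E p^{\omega}_x\,\lambda(\diff\sigma_x)=1$, while the $i$-th component contributes exactly $\zeta^{t}_\beta(R_i)$ by~(\ref{zetat}). This yields~(\ref{campanino2}).

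\emph{Stage (ii).} The key pointwise bound is $\abs{\zeta^{t}_\beta(R)}\le(\delta\lVert f\rVert)^{\abs{\tilde\gamma^1_R}}\hat\zeta_\beta(\gamma^2_R)$ whenever $0<\abs t<\delta\sqrt{D_k}$: indeed $\abs{\xi^{t}_{\beta,\{x\}}(\sigma)}=\abs{e^{itf\circ\theta_x(\sigma)/\sqrt{D_k}}-1}\le\tfrac{\abs t}{\sqrt{D_k}}\abs{f\circ\theta_x(\sigma)}\le\delta\lVert f\rVert$ (using $\abs{e^{iu}-1}\le\abs u$ for $u\in\mathbb R$ and boundedness of $f$), while the remaining pair factors, integrated against the probability densities $p^{\omega}_x$, produce $\hat\zeta_\beta(\gamma^2_R)$. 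Choose $\delta_0$ so small that $\delta\lVert f\rVert<e^{-1}$ for all $\delta<\delta_0$. Running the computations~(\ref{unir})--(\ref{ineqhat}) and the proof of Proposition~\ref{theoremlemma1} verbatim with the constant $C$ replaced by $\delta\lVert f\rVert$ gives $a_{\delta,\beta}:=(1+\delta\lVert f\rVert)^2e^{4}\sup_{x}\sum_{y\neq x}\lVert e^{-\beta\Phi_{\{x,y\}}}-1\rVert$, finite by absolute summability of $\Phi$ and tending to $0$ as $\beta\to0$; since $\delta\lVert f\rVert e<1$ there is $\beta_\delta>0$ with $\delta\lVert f\rVert e+a_{\delta,\beta}<1$ for $\beta<\beta_\delta$. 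Hence the Koteck\'y--Preiss inequality $\sum_{R:\,\tilde R\cap\tilde R_0\neq\emptyset}\abs{\zeta^{t}_\beta(R)}e^{\abs{\tilde R}}\le(\delta\lVert f\rVert e+a_{\delta,\beta})\abs{\tilde R_0}<\abs{\tilde R_0}$ holds, uniformly in $\omega$ and in $t$ over the admissible range. For $t=0$ the singleton activities vanish identically, so only $R\in\mathcal R_2$ survive and the same holds with $a_\beta:=e^{4}\sup_x\sum_{y\neq x}\lVert e^{-\beta\Phi_{\{x,y\}}}-1\rVert<1$ for $\beta<\beta_0$.

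\emph{Stage (iii).} By Theorem~5.4 of~\cite{FV} with the weight $R\mapsto\abs{\tilde R}$, the cluster expansion converges absolutely and~(\ref{cluster}) holds, with the inner sum over $\mathcal R^n$ for $0<\abs t<\delta\sqrt{D_k}$ and over $\mathcal R^n_2$ for $t=0$ (singleton activities being $0$). For the linear bounds~(\ref{velenik2})--(\ref{tzero}) I split $\mathcal R^n$ into the tuples lying in $\mathcal L^n$ (at least one $R_i$ with $\abs{\tilde R_i}>1$) and the all-singleton tuples: the former contributes $\le a_{\delta,\beta}\abs{\Lambda_k}$ by Proposition~\ref{theoremlemma1} (run with $C=\delta\lVert f\rVert$ and the activity bound of Stage~(ii)); the latter reduces, by connectedness of $\mathcal G_{\{R_1,\dots,R_n\}}$, to clusters $(\{\{x\}\},\dots,\{\{x\}\})$ sitting at a single site $x$, for which an elementary computation (or the one-site identity $\log(1+z)=\sum_n\phi^T z^n$) gives $\phi^T=(-1)^{n-1}/n$, so it is $\le\sum_{x\in\Lambda_k}\sum_{n\ge1}\frac1n\abs{\zeta^{t}_\beta(\{\{x\}\})}^n\le-\log(1-\delta\lVert f\rVert)\abs{\Lambda_k}$ since $\abs{\zeta^{t}_\beta(\{\{x\}\})}\le\delta\lVert f\rVert<1$. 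Thus~(\ref{velenik2}) holds with $\alpha_{\delta,\beta}:=a_{\delta,\beta}-\log(1-\delta\lVert f\rVert)$, which is $<1$ for $\delta,\beta$ small and tends to $0$ as $(\delta,\beta)\to(0,0)$; for $t=0$ only the $\mathcal R_2$ part survives and~(\ref{tzero}) holds with $\alpha_\beta:=a_\beta\to0$. The main obstacle is Stage~(ii): verifying the Koteck\'y--Preiss condition uniformly in the boundary condition $\omega$ and in $t$ over $0<\abs t<\delta\sqrt{D_k}$, and crucially with the \emph{sharp} singleton bound $\delta\lVert f\rVert$ rather than the fixed $C\in(0,e^{-1})$ of Proposition~\ref{theoremlemma1} — this is exactly what forces $\alpha_{\delta,\beta}\to0$ as $\delta\to0$ and is the only point where the restriction on the range of $t$ enters.
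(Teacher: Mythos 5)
Your proposal is correct and follows essentially the same route as the paper: the Mayer expansion into disjoint polymers for (\ref{campanino2}), the singleton-activity bound $\abs{\xi^t_{\beta,\{x\}}}\le\delta\lVert f\rVert$, the split of clusters into same-site singleton chains (with $\phi^T=(-1)^{n-1}/n$) plus the $\mathcal{L}^n$ part handled by Proposition \ref{theoremlemma1} with $C=\delta\lVert f\rVert$, and Theorem 5.4 of \cite{FV} for convergence and the exponential representation. Your constant $\alpha_{\delta,\beta}=a_{\delta,\beta}-\log(1-\delta\lVert f\rVert)$ coincides with the paper's $A(\delta)+a_{\beta}$, so no further comparison is needed.
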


\begin{proof}

By Fubini's Theorem, let us write the partition function $Z^{\omega}_{\Lambda_k,\beta,t}$ depending on $\Xi^{\omega}_{\Lambda_k, \beta,t}$,
\begin{align*}
&\int_{E^{\Lambda_k}}\prod_{x\in \Lambda_k}e^{it\frac{f\circ \theta_x(\sigma)}{\sqrt{D_k}}}\prod_{\{x,y\}\in \mathcal{P}_2(\Lambda_k)}e^{-\beta \Phi_{\{x,y\}}(\sigma_x,\sigma_y)}\prod_{x\in \Lambda_k}e^{-\beta h^{\omega}_x(\sigma_x)}\prod_{x\in \Lambda_k} \lambda(\diff \sigma_x)\\
=&\sum_{n=0}^{\infty}\sum_{\substack{\{R_1,\ldots,R_n\} \\ \tilde{R}_i\cap \tilde{R}_j=\emptyset, i\neq j}}\int_{E^{\Lambda_k}}\prod_{i=1}^n\prod_{b\in R_i}\xi_{\beta,b}(\sigma)\prod_{x\in \Lambda_k}e^{-\beta h^{\omega}_x(\sigma_x)}\prod_{x\in \Lambda_k} \lambda(\diff \sigma_x)\\
=&\sum_{n=0}^{\infty}\sum_{\substack{\{R_1,\ldots,R_n\} \\ \tilde{R}_i\cap \tilde{R}_j=\emptyset, i\neq j}} \Bigg[ \lf \int_{E^{\Lambda_k\setminus\cup_{i=1}^n \tilde{R}_i}}\prod_{x\in \Lambda_k\setminus\cup_{i=1}^n \tilde{R}_i}e^{-\beta h^{\omega}_x(\sigma_x)}
\prod_{x\in \Lambda_k\setminus\cup_{i=1}^n \tilde{R}_i}\lambda(\diff \sigma_x) \ri\\
&\ \cdot
\prod_{i=1}^n \int_{E^{\tilde{R}_i}}\prod_{b\in R_i}\xi_{\beta,b}(\sigma)\prod_{x\in \tilde{R}_i}e^{-\beta h^{\omega}_x(\sigma_x)}\prod_{x\in \tilde{R}_i}\lambda(\diff \sigma_x)\Bigg]\\
=&\sum_{n=0}^{\infty}\sum_{\substack{\{R_1,\ldots,R_n\} \\ \tilde{R}_i\cap \tilde{R}_j=\emptyset, i\neq j}} \Bigg[\lf \prod_{x\in \Lambda_k\setminus\cup_{i=1}^n \tilde{R}_i}\int_{E}e^{-\beta h^{\omega}_x(\sigma_x)} \lambda(\diff \sigma_x) \ri\\
&\ \cdot
\prod_{i=1}^n \int_{E^{\tilde{R}_i}}\prod_{b\in R_i}\xi_{\beta,b}(\sigma)\prod_{x\in \tilde{R}_i}e^{-\beta h^{\omega}_x(\sigma_x)}\prod_{x\in \tilde{R}_i}\lambda(\diff \sigma_x)\Bigg]\\
=& \lf \prod_{x\in \Lambda_k}\int_{E}e^{-\beta h^{\omega}_x(\sigma_x)} \lambda(\diff \sigma_x) \ri \lf 1+ \sum_{n=1}^{\infty}\sum_{\substack{\{R_1,\ldots,R_n\} \\ \tilde{R}_i\cap \tilde{R}_j=\emptyset, i\neq j}} \prod_{i=1}^n \zeta^t_{\beta}(R_i)\ri.
\end{align*}

For $t=0$, the proof of Equation (\ref{tzero}) follows by a very similar argument as in \cite{FV} (Lemma 6.99). Assume $0<\abs{t}<\delta\sqrt{D_k}$ for some $\delta>0$ (we choose a suitable $\delta$ along the proof).

Note that for every $x \in \mathbb{Z}^d$, since $\abs{f\circ \theta_x(\sigma)} \leq \lVert f \rVert$, there exists $\delta_1>0$ such that, for every $0<\delta<\delta_1$,
\begin{equation}\label{ineqxi}
\abs{\xi_{\beta,\{x\}}(\sigma)}=\sqrt{2-2\cos\left(\frac{t}{\sqrt{D_k}}f\circ \theta_x(\sigma)\right)}\le \delta \lVert f\rVert.
\end{equation}
Thus, for every polymer $R$ with $\tilde{R} \subseteq \Lambda_k$,
\be
\abs{\zeta^t_{\beta}(R)} \leq (\delta \lVert f\rVert)^{\abs{\tilde{\gamma}^1_R}}\hat{\zeta}_{\beta}(\gamma^2_R).
\ee
In particular, $\abs{\zeta^t_{\beta}(\{x\})}\le \delta \lVert f\rVert$ for every $x\in \Lambda_k$. Note that
\be\label{phiy}
\phi^T(\underbrace{\{x\},\ldots,\{x\}}_{n \text{ copies}})=(-1)^{n-1}\frac{(n-1)!}{n!}=\frac{(-1)^{n-1}}{n}.
\ee
Thus,
\be
\sum_{n=1}^{\infty}\sum_{x\in \Lambda_k}\abs{\phi^T(\{x\},\ldots,\{x\})}\abs{\zeta^t_{\beta}(\{x\})^n} \le A(\delta)\abs{\Lambda_k},
\ee
where, for every $\delta<\lVert f\rVert^{-1}$,
\be
A(\delta):= \sum_{n=1}^{\infty}\frac{1}{n}(\delta \lVert f\rVert)^n<1.
\ee

Choosing $C=\delta \lVert f\rVert$, there exists $\delta_0<\min\{\delta_1,\lVert f\rVert^{-1}\}$ such that, for every $\delta<\delta_0$, we have $Ce<1$. By Proposition \ref{theoremlemma1}, there exists $\beta_{\delta}>0$ such that, for every $\beta<\beta_{\delta}$,
\begin{align*}
&\sum_{n=1}^{\infty} \sum_{(R_1,\ldots,R_n)} \abs{\phi^T(R_1,\ldots,R_n)} \prod_{i=1}^n \abs{\zeta^t_{\beta}(R_i)} \\
&\le A(\delta)\abs{\Lambda_k}+\sum_{n=1}^{\infty} \sum_{(R_1,\ldots,R_n)\in \mathcal{L}^n} \abs{\phi^T(R_1,\ldots,R_n)} \prod_{i=1}^n (\delta \lVert f\rVert)^{\abs{\tilde{\gamma}^1_{R_1}}}\hat{\zeta}_{\beta}(\gamma^2_{R_i})\\
&\le (A(\delta) + a_{\beta})\abs{\Lambda_k}.
\end{align*}

Therefore, we have (\ref{velenik2}) choosing $\alpha_{\delta,\beta}=A(\delta) + a_{\beta}$.
The proof of Expansion (\ref{cluster}) will be omitted since it is similar to the argument in \cite{FV} (Chapter 5).
\end{proof}

For a fixed $c>0$ and $\beta>0$, define the activity function $\eta^c_{\beta}:\mathcal{R}_2\to \mathbb{R}$ by 
\be
\eta^c_{\beta}(R)=e^{c\abs{\tilde{R}}}\hat{\zeta}_{\beta}(R).
\ee
The next theorem shows a condition of a convergence of a polymer cluster expansion $\Xi^{\omega}_{\Lambda_k, \beta}(\eta^c_{\beta})$ with  activity function $\eta^c_{\beta}$. This will be useful to prove Lemma \ref{lem23} in Section \ref{proofofthe mainresult}, where we are interested in the region $\delta\sqrt{D_k}\le \abs{t}\le \pi\sqrt{D_k}/h$ at sufficiently large temperatures.

\begin{theorem}\label{theoremeta}
Suppose that $\Phi$ is a translation invariant and absolutely summable potential, and $f:\Omega \to \mathbb{Z}$ is a bounded $\scrf_0$-measurable function satisfying $\Var_{\lambda}(f)>0$.
For every $c>0$ there exists $\beta_c>0$ such that, for every $\beta<\beta_c$, there exists $\bar{\alpha}_{c,\beta}<1$ satisfying
\be\label{etathm}
\sum_{n=1}^{\infty} \sum_{(R_1,\ldots,R_n)\in \mathcal{R}^n_2} \abs{\phi^T(R_1,\ldots,R_n)} \prod_{i=1}^n \eta^c_{\beta}(R_i) \le \bar{\alpha}_{c,\beta} \abs{\Lambda_k}.
\ee
Moreover, under the above conditions, the polymer partition function
\begin{equation}\label{xiforeta}
\Xi^{\omega}_{\Lambda_k, \beta}(\eta^c_{\beta}):=1+\sum_{n=1}^\infty \sum_{\substack{\{R_1,\ldots, R_n\}\\ \tilde{R}_i \cap \tilde{R}_j =\emptyset,i \neq j}}\prod_{i=1}^n\eta^c_{\beta}(R_i)
\end{equation}
can be written as
\be\label{xietathm}
\Xi^{\omega}_{\Lambda_k, \beta}(\eta^c_{\beta})=\exp\left( \sum_{n=1}^{\infty} \sum_{(R_1,\ldots,R_n)\in \mathcal{R}^n_2} \phi^T(R_1,\ldots,R_n) \prod_{i=1}^n \eta^c_{\beta}(R_i) \right).
\ee
\end{theorem}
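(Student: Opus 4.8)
The plan is to run the high-temperature cluster expansion exactly as in the proof of Proposition \ref{theoremlemma1} and of the $t=0$ case of Theorem \ref{clusterexpansionthm}; the only new feature is the geometric weight $e^{c\abs{\tilde{R}}}$ carried by $\eta^c_{\beta}$, which is absorbed by choosing the temperature lower, so that $\beta_c$ genuinely depends on $c$. First I would record a crude pointwise bound on the activity: since each $p^{\omega}_x$ is a probability density on $(E,\mathscr{E},\lambda)$, integrating the variables $\sigma_x$, $x\in\tilde{R}$, one at a time in (\ref{hatzeta}) and using $\abs{\xi_{\beta,\{x,y\}}(\sigma_x,\sigma_y)}\le\lVert e^{-\beta\Phi_{\{x,y\}}}-1\rVert$ gives
\be
\hat{\zeta}_{\beta}(R)\le\prod_{\{x,y\}\in R}\lVert e^{-\beta\Phi_{\{x,y\}}}-1\rVert,\qquad\text{hence}\qquad\eta^c_{\beta}(R)\le e^{c\abs{\tilde{R}}}\prod_{\{x,y\}\in R}\lVert e^{-\beta\Phi_{\{x,y\}}}-1\rVert,
\ee
uniformly in $\omega$ and $k$. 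Moreover, since each $\Phi_{\{x,y\}}$ is bounded with $\lVert\Phi_{\{x,y\}}\rVert\le\vertiii{\Phi}$, we have $\lVert e^{-\beta\Phi_{\{x,y\}}}-1\rVert\le e^{\beta\lVert\Phi_{\{x,y\}}\rVert}-1\le\beta\,e^{\beta\vertiii{\Phi}}\lVert\Phi_{\{x,y\}}\rVert$, so that $\varepsilon(\beta):=\sup_{x}\sum_{y\neq x}\lVert e^{-\beta\Phi_{\{x,y\}}}-1\rVert\le\beta\,e^{\beta\vertiii{\Phi}}\vertiii{\Phi}\to0$ as $\beta\to0$.

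Next I would verify the convergence criterion of \cite{FV}, Theorem 5.4, for the polymer model with polymer set $\mathcal{R}_2$, activity $\eta^c_{\beta}$, and weight function $a(R)=\abs{\tilde{R}}$. Bounding, as usual, a sum over polymers meeting a fixed $R_0\in\mathcal{R}_2$ by $\abs{\tilde{R}_0}$ times a sum over polymers through a fixed site, one is reduced to estimating
\be
\max_{z}\sum_{\substack{R\in\mathcal{R}_2\\ z\in\tilde{R}}}\eta^c_{\beta}(R)\,e^{\abs{\tilde{R}}}=\max_{z}\sum_{\substack{R\in\mathcal{R}_2\\ z\in\tilde{R}}}e^{(c+1)\abs{\tilde{R}}}\hat{\zeta}_{\beta}(R),
\ee
and the argument of \cite{FV}, Lemma 6.99 (already used for the constant $a_\beta$ in Proposition \ref{theoremlemma1}), applied with the exponential weight $e^{(c+1)\abs{\tilde{R}}}$, bounds this quantity by a constant of the form $K_c\,\varepsilon(\beta)$ with $K_c<\infty$ for each fixed $c$. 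Since $\varepsilon(\beta)\to0$, there is $\beta_c>0$ with $K_c\,\varepsilon(\beta)<1$ for all $\beta<\beta_c$; set $\bar{\alpha}_{c,\beta}:=K_c\,\varepsilon(\beta)$. Then the criterion of \cite{FV}, Theorem 5.4, holds, and it yields, for each $R_1\in\mathcal{R}_2$, the bound $1+\sum_{n\ge2}n\sum_{(R_2,\ldots,R_n)}\abs{\phi^T(R_1,\ldots,R_n)}\prod_{i=2}^n\eta^c_{\beta}(R_i)\le e^{\abs{\tilde{R}_1}}$, the inner sum running over ordered tuples of polymers in $\mathcal{R}_2$ contained in $\Lambda_k$. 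Multiplying by $\eta^c_{\beta}(R_1)$, summing over $R_1$ with $\tilde{R}_1\subset\Lambda_k$, and using $\sum_{R_1:\tilde{R}_1\subset\Lambda_k}\eta^c_{\beta}(R_1)e^{\abs{\tilde{R}_1}}\le\abs{\Lambda_k}\,\bar{\alpha}_{c,\beta}$, exactly as at the end of the proof of Proposition \ref{theoremlemma1}, gives (\ref{etathm}); note that, since every polymer in $\mathcal{R}_2$ already has $\abs{\tilde{R}}\ge2$, the full sum over $\mathcal{R}_2^n$ here plays the role that the restricted sum over $\mathcal{L}^n$ played there.

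Finally, (\ref{xietathm}) is the standard ``logarithm of the polymer partition function'' identity for absolutely convergent cluster expansions (\cite{FV}, Chapter 5): one expands $\Xi^{\omega}_{\Lambda_k,\beta}(\eta^c_{\beta})$ over families of pairwise non-intersecting polymers in $\mathcal{R}_2$, takes the logarithm, and reorganizes the result in terms of the Ursell functions $\phi^T$, the absolute convergence (\ref{etathm}) legitimating the rearrangement. I expect the only delicate point --- and the main obstacle --- to be the bookkeeping of the weight $e^{c\abs{\tilde{R}}}$: it is precisely this factor that forces $K_c$, and hence $\beta_c$, to depend on $c$, and one must check that $K_c$ stays finite for each fixed $c$ (it does blow up as $c\to\infty$, which is harmless). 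Everything else is a verbatim repetition of the high-temperature cluster-expansion estimates already carried out for Proposition \ref{theoremlemma1} and Theorem \ref{clusterexpansionthm}.
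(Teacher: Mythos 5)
Your proposal is correct and follows essentially the same route as the paper: the paper also bounds $\hat{\zeta}_{\beta}$ by $\prod_{\{x,y\}\in R}\lVert e^{-\beta\Phi_{\{x,y\}}}-1\rVert$, absorbs the weight $e^{c\abs{\tilde{R}}}$ into an explicit constant $\bar{\alpha}_{c,\beta}=e^{2(2+c)}\sup_{x}\sum_{y\neq x}\lVert e^{-\beta\Phi_{\{x,y\}}}-1\rVert$ (your $K_c\,\varepsilon(\beta)$), verifies the Lemma 6.99--type criterion $\sum_{R:\tilde{R}\cap\tilde{R}_1\neq\emptyset}\eta^c_{\beta}(R)e^{\abs{\tilde{R}}}\le\bar{\alpha}_{c,\beta}\abs{\tilde{R}_1}$ at sufficiently small $\beta$ depending on $c$, and then concludes both (\ref{etathm}) and (\ref{xietathm}) by Theorem 5.4 and Chapter 5 of \cite{FV}, exactly as you do.
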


\begin{proof}
For a fixed $c>0$, define
\be
\bar{\alpha}_{c,\beta}:=e^{2(2+c)}\sup_{x\in \mathbb{Z}^d}\sum_{y\neq x}\lVert e^{-\beta \Phi_{\{x,y\}}}-1 \rVert.
\ee
Since $\Phi$ is absolutely summable, there exists $\beta_c>0$ such that, for every $\beta<\beta_c$, we have $\bar{\alpha}_{c,\beta}<1$. Moreover, $\bar{\alpha}_{c,\beta}\to 0$ when $\beta\to 0$. 
For a fixed polymer $R_1\in \mathcal{R}_2$, following the proof of Lemma 6.99 in \cite{FV}, we have
\be
\sum_{\substack{R\in \mathcal{R}_2 \\ \tilde{R}\cap \tilde{R}_1\neq \emptyset}}\eta^c_{\beta}(R)e^{\abs{\tilde{R}}}
= \sum_{\substack{R\in \mathcal{R}_2\\ \tilde{R}\cap \tilde{R}_1\neq \emptyset}}\hat{\zeta}_{\beta}(R)e^{(1+c)\abs{\tilde{R}}}\le
\bar{\alpha}_{c,\beta} \abs{\tilde{R}_1}.
\ee
The proof for Estimate (\ref{etathm}) finishes applying Theorem 5.4 in \cite{FV}. 
\end{proof}

\section{Proof of the main result}\label{proofofthe mainresult}

In this section we prove the main result. The section is divided into two subsections in which we prove each condition of Theorem \ref{main}. We start the proof with Proposition \ref{prop1} that is an adaptation of Proposition 3.3 in \cite{CCT}. To show it, we need the following lemma that can be found in \cite{DT} and in \cite{Gne}.

\begin{lemma}\label{lemmadt}
If $X$ is a lattice distributed random variable with maximal span of the distribution $h>0$, then for every $\varepsilon>0$, it is possible to find a positive constant $d_X$ such that for every $t$, $\varepsilon\le \abs{t} \le \frac{2\pi}{h}-\varepsilon$ it is true the following inequality,
\be
\abs{\mathbb{E}(e^{itX})}\le e^{-d_X}.
\ee
\end{lemma}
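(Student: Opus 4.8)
The plan is to reduce the estimate to the compactness of the set $\{\,t:\varepsilon\le\abs{t}\le\tfrac{2\pi}{h}-\varepsilon\,\}$ together with the strict bound $\abs{\mathbb{E}(e^{itX})}<1$ off the lattice $\tfrac{2\pi}{h}\mathbb{Z}$. Set $\phi(t):=\mathbb{E}(e^{itX})$. Since $X$ takes its values in $a+h\mathbb{Z}$, write $X=a+hY$ with $Y$ integer-valued, so that $\phi(t)=e^{ita}\psi(th)$ with $\psi(s):=\mathbb{E}(e^{isY})$; because $\psi$ has period $2\pi$, the function $\abs{\phi}$ is $\tfrac{2\pi}{h}$-periodic, and since $\phi(-t)=\overline{\phi(t)}$ it is even as well. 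Consequently it is enough to bound $\abs{\phi}$ on the compact interval $I_\varepsilon:=[\varepsilon,\tfrac{2\pi}{h}-\varepsilon]\subset(0,\tfrac{2\pi}{h})$ (if $\varepsilon>\tfrac{\pi}{h}$ the claimed range of $t$ is empty and there is nothing to prove).

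Next I would verify that $\abs{\phi(t)}<1$ for every $t\in(0,\tfrac{2\pi}{h})$. Arguing by contradiction, suppose $\abs{\phi(t_0)}=1$ for some such $t_0$. Since $\abs{e^{it_0X}}\equiv 1$, equality in $\abs{\mathbb{E}(e^{it_0X})}\le\mathbb{E}\abs{e^{it_0X}}$ forces $e^{it_0X}$ to equal a constant $e^{i\theta}$ almost surely; hence $t_0X\in\theta+2\pi\mathbb{Z}$ almost surely, so every attainable value of $X$ lies in $\tfrac{\theta}{t_0}+\tfrac{2\pi}{t_0}\mathbb{Z}$. Since $0<t_0<\tfrac{2\pi}{h}$ we have $h_1:=\tfrac{2\pi}{t_0}>h$; picking a value $x_0$ of $X$ and absorbing the shift, all values of $X$ are of the form $x_0+bh_1$, $b\in\mathbb{Z}$, contradicting the maximality of the span $h$. (For integer-valued $X$ — the only case used in the sequel, e.g.\ $X=S_k$ or $X=f\circ\theta_x$ — one may take $x_0\in\mathbb{Z}$, so this matches the definition of a lattice distribution verbatim.)

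Finally, $\abs{\phi}$ is continuous, being a characteristic function, so it attains its supremum $\rho$ over the compact set $I_\varepsilon$, and the previous step gives $\rho<1$; moreover $\rho>0$ since a nontrivial characteristic function cannot vanish on a nondegenerate interval. Setting $d_X:=-\log\rho>0$ yields $\abs{\mathbb{E}(e^{itX})}\le\rho=e^{-d_X}$ on $I_\varepsilon$, and the evenness and $\tfrac{2\pi}{h}$-periodicity of $\abs{\phi}$ recorded above extend this to all $t$ with $\varepsilon\le\abs{t}\le\tfrac{2\pi}{h}-\varepsilon$.

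The argument is routine; the only point that needs a little care is translating ``$\abs{\phi(t_0)}=1$ for some $t_0\in(0,\tfrac{2\pi}{h})$'' into the paper's combinatorial definition of a maximal span, i.e.\ checking that the additive offset of the coarser lattice can be taken to be an integer. This is automatic for the integer-valued $X$ that is actually used, so I do not anticipate a genuine obstacle.
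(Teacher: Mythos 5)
The paper never proves this lemma: it is imported verbatim from \cite{DT} and Gnedenko \cite{Gne}, where it is the standard ingredient of the local limit theorem for lattice variables. Your argument is exactly that classical proof --- strict inequality $\abs{\mathbb{E}(e^{itX})}<1$ on $(0,2\pi/h)$ via the equality case of the triangle inequality combined with maximality of the span, then continuity plus compactness of $[\varepsilon,2\pi/h-\varepsilon]$ --- and it is correct; the care you take in matching the coarser lattice to the paper's integer-offset definition of maximality is the right point to check, and it suffices here because the lemma is only ever applied to the integer-valued bounded function $f\circ\theta_x$. One side remark is false, though harmlessly so: a nontrivial characteristic function \emph{can} vanish on a nondegenerate interval (P\'olya-type examples; even integer-lattice examples exist, e.g.\ with weights $p_b=\abs{\widehat{u}(b)}^2$ for a bump function $u$ supported on a short arc of the circle, whose characteristic function vanishes off a neighborhood of $0$ mod $2\pi$). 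So your justification of $\rho>0$ does not stand as stated, but nothing is lost: if $\rho=0$ the asserted bound holds with any $d_X>0$, and in the paper's application $f$ is bounded, so the relevant characteristic functions are trigonometric polynomials and $\rho>0$ is automatic. With that one-line repair the proof is complete and coincides with the cited references' argument.
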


\begin{proposition}\label{prop1}
Suppose that $\Phi$ is a translation invariant and absolutely summable potential, and $f:\Omega \to \mathbb{Z}$ is a bounded $\scrf_0$-measurable function satisfying $\Var_{\lambda}(f)>0$.
\begin{enumerate}
\item[(a)] For every $\beta>0$, there exists a positive $d(\beta)$ such that, for every $\Lambda\Subset \mathbb{Z}^d$, and $x\in \Lambda$,
\be\label{variancepx}
\E^{\omega}_x((f\circ \theta_x(\sigma))^2)\ge d(\beta)
\ee
uniformly with respect to $\omega$. Moreover, $d(\beta)$ is decreasing in $\beta$.
\item[(b)] Assume that $f\circ \theta_x$ is a lattice distributed random variable for every $x\in \mathbb{Z}^d$. For every $\beta>0$ and $0<\delta<\pi/h$, there exists a positive constant $c=c(\beta,\delta)$ such that, for every $\Lambda\Subset \mathbb{Z}^d$, and $x\in \Lambda$,
\be\label{momentpx}
\abs{\E^{\omega}_x(e^{itf\circ \theta_x(\sigma)})}<e^{-c},
\ee
for all $\delta\le \abs{t}\le \pi/h$ uniformly with respect to $\omega$.
\item[(c)] Assume that $f\circ \theta_x$ is a lattice distributed random variable for every $x\in \mathbb{Z}^d$. For every $0<\delta<\pi/h$, there exist $\beta'_{\delta}>0$ and a positive constant $c=c(\beta'_{\delta},\delta)$ such that, for every $\Lambda\Subset \mathbb{Z}^d$, and $x\in \Lambda$,
\be\label{momentpx2}
\abs{\E^{\omega}_x(e^{itf\circ \theta_x(\sigma)})}<e^{-c},
\ee
for all $\delta\le \abs{t}\le \pi/h$ and $\beta<\beta'_{\delta}$ uniformly with respect to $\omega$.
\end{enumerate}
\end{proposition}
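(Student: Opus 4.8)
The plan is to reduce all three statements to estimates on the single-site measure $p^\omega_x$, which is a probability density on $E$ proportional to $\exp(-\beta h^\omega_x(\cdot))$, and then control the exponential tilt $e^{-\beta h^\omega_x}$ uniformly in $\omega$ and in $\Lambda$. The crucial observation is that the single-site field $h^\omega_x(\sigma_x)=\sum_{y\notin\Lambda}\Phi_{\{x,y\}}(\sigma_x,\omega_y)$ satisfies $\lVert h^\omega_x\rVert\le\vertiii{\Phi}<\infty$ by absolute summability, uniformly in $x$, $\Lambda$, and $\omega$. Hence the Radon--Nikodym derivative of $p^\omega_x$ with respect to the normalized reference measure $\bar\lambda:=\lambda(\cdot)/\lambda(E)$ is bounded above and below by $e^{\mp 2\beta\vertiii{\Phi}}$ (after accounting for the normalizing constant, which itself lies in $[e^{-\beta\vertiii\Phi}\lambda(E),e^{\beta\vertiii\Phi}\lambda(E)]$). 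In other words, $p^\omega_x\,\diff\lambda$ and $\bar\lambda$ are mutually absolutely continuous with density uniformly bounded away from $0$ and $\infty$ by a constant depending only on $\beta$ and $\vertiii\Phi$, and decreasing the density gap as $\beta\downarrow 0$.

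For part (a): Since $\Var_\lambda(f)>0$, the function $g:=f\circ\pi_0$ (viewed on $E$) is not $\bar\lambda$-a.s.\ constant, so $\int_E (g-m)^2\,\diff\bar\lambda =: v_0>0$ for $m:=\int g\,\diff\bar\lambda$. Then for any probability density $p$ on $E$ with $p\ge \kappa$ against $\bar\lambda$ (here $\kappa=\kappa(\beta)=e^{-2\beta\vertiii\Phi}$), one has $\E_p((g-\E_p g)^2)=\inf_{c}\E_p((g-c)^2)\ge \kappa\inf_c\int(g-c)^2\,\diff\bar\lambda=\kappa v_0$. Actually I want the second moment $\E_p(g^2)$, not the variance; but $\E_p(g^2)\ge \E_p((g-\E_p g)^2)\ge \kappa v_0$, so setting $d(\beta):=e^{-2\beta\vertiii\Phi}v_0$ works, and it is manifestly decreasing in $\beta$. (Translation invariance of $\Phi$ is what makes $v_0$ independent of $x$; strictly, $f\circ\theta_x$ has the same law under $p^\omega_x$-type tilts as $f$ does under the correspondingly tilted measure, so the bound is uniform in $x$.)

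For part (b): Fix $\beta>0$ and $\delta\in(0,\pi/h)$. By Lemma \ref{lemmadt} applied to the random variable $f\circ\theta_x$ (lattice distributed with maximal span $h$) under the \emph{reference} normalized measure $\bar\lambda$, there is $d_0>0$ with $\lvert\int_E e^{itg}\,\diff\bar\lambda\rvert\le e^{-d_0}<1$ for all $\delta\le\lvert t\rvert\le 2\pi/h-\delta$, in particular for $\delta\le\lvert t\rvert\le\pi/h$. The quantity $\E^\omega_x(e^{itf\circ\theta_x})=\int_E e^{itg}p^\omega_x\,\diff\bar\lambda$ differs from $\int e^{itg}\,\diff\bar\lambda$ by at most $\int\lvert p^\omega_x-1\rvert\,\diff\bar\lambda\le e^{2\beta\vertiii\Phi}-1 + (1-e^{-2\beta\vertiii\Phi})=:\rho(\beta)$, uniformly in $\omega,x,\Lambda$. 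Hence $\lvert\E^\omega_x(e^{itf\circ\theta_x})\rvert\le e^{-d_0}+\rho(\beta)$. This is $<1$ provided $\rho(\beta)<1-e^{-d_0}$, which holds once $\beta$ is small but not necessarily for all $\beta$ — so for a general fixed $\beta$ a slightly sharper argument is needed. The fix: instead of comparing to $\bar\lambda$, apply Lemma \ref{lemmadt} directly to $f\circ\theta_x$ under the measure $p^\omega_x\,\diff\lambda$ itself, noting that $f\circ\theta_x$ remains lattice distributed with the \emph{same} maximal span $h$ under any measure equivalent to $\bar\lambda$ (the set of attainable values is unchanged, since $p^\omega_x$ is strictly positive $\bar\lambda$-a.e.), so Lemma \ref{lemmadt} yields a constant $d_X=d_X(\beta,\delta)>0$ with $\lvert\E^\omega_x(e^{itf\circ\theta_x})\rvert\le e^{-d_X}$; the only thing to check is that $d_X$ can be chosen uniformly in $\omega,x,\Lambda$, which follows because the characteristic function $t\mapsto\E^\omega_x(e^{itg})$ is continuous in $t$, the family $\{p^\omega_x\}$ is uniformly bounded in density against $\bar\lambda$, and $\lvert\E^\omega_x(e^{itg})\rvert<1$ pointwise on the compact set $\delta\le\lvert t\rvert\le\pi/h$ by maximality of $h$ — a compactness/uniform-continuity argument then gives a uniform bound $c(\beta,\delta)$.

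For part (c): This is part (b) made uniform in $\beta$ over a small interval, and it is \emph{easier}, not harder, because now we may take $\beta$ small. Use the comparison-with-$\bar\lambda$ estimate from (b): $\lvert\E^\omega_x(e^{itf\circ\theta_x})\rvert\le e^{-d_0}+\rho(\beta)$ where $d_0=d_0(\delta)$ comes from Lemma \ref{lemmadt} applied to $\bar\lambda$ and $\rho(\beta)\to 0$ as $\beta\to 0$. Choose $\beta'_\delta$ so small that $\rho(\beta)\le\tfrac12(1-e^{-d_0})$ for all $\beta<\beta'_\delta$; then $\lvert\E^\omega_x(e^{itf\circ\theta_x})\rvert\le 1-\tfrac12(1-e^{-d_0})<1$ for all such $\beta$, all $\delta\le\lvert t\rvert\le\pi/h$, and all $\omega,x,\Lambda$, so set $c(\beta'_\delta,\delta):=-\log(1-\tfrac12(1-e^{-d_0}))>0$.

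The main obstacle is purely the uniformity in $\omega$ (equivalently in the boundary field $h^\omega_x$) and in the volume $\Lambda$ in part (b) at a \emph{fixed, arbitrary} temperature: one cannot simply perturb off the reference measure because the perturbation size $\rho(\beta)$ need not be $<1$. The resolution is to work directly with $p^\omega_x$ and exploit that the lattice structure of $f\circ\theta_x$ (hence the strict bound $<1$ of its characteristic function away from multiples of $2\pi/h$) is preserved under any measure equivalent to $\bar\lambda$, combined with a compactness argument over $t$ in the compact range $[\delta,\pi/h]$ and the uniform two-sided density bounds to get a constant independent of $\omega,x,\Lambda$. Parts (a) and (c) are then routine given the two-sided density bounds.
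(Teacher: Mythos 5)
Your proof is correct and, for parts (a) and (c), follows essentially the same route as the paper: the two-sided bound $e^{-2\beta\vertiii{\Phi}}\le p^{\omega}_x\le e^{2\beta\vertiii{\Phi}}$ relative to the normalized reference measure $\nu=\lambda(\cdot)/\lambda(E)$ gives (a) at once (the paper lower-bounds $\E^{\omega}_x(f^2)$ by $e^{-2\beta\vertiii{\Phi}}\lambda(f^2)/\lambda(E)$, you by the slightly smaller $e^{-2\beta\vertiii{\Phi}}\Var_{\nu}(f)$; both are positive and decreasing in $\beta$), and (c) is obtained exactly as in the paper by expanding the characteristic function over the lattice values of $f$, applying Lemma \ref{lemmadt} to $\nu$, and absorbing a total-variation error that vanishes as $\beta\to 0$. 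The one place you genuinely diverge is (b): the paper writes $\E^{\omega}_x(e^{itf\circ\theta_x})=\sum_n e^{itn}\bbP^{\omega}_x(f=n)$ and invokes Lemma \ref{lemmadt} for this lattice variable, leaving the uniformity of the constant in $\omega$, $x$, $\Lambda$ implicit, whereas you correctly note that the naive perturbation off $\nu$ fails at fixed large $\beta$ and that uniformity requires an argument. Your compactness sketch is the right idea but needs one more explicit ingredient to close: since $f$ is bounded, the law of $f$ under $\bbP^{\omega}_x$ is a probability vector on the finitely many values $a+bh$, $b\in\mathfrak{B}$, constrained by $e^{-2\beta\vertiii{\Phi}}\nu(f=a+bh)\le \bbP^{\omega}_x(f=a+bh)\le e^{2\beta\vertiii{\Phi}}\nu(f=a+bh)$; it therefore ranges over a compact subset of a finite-dimensional simplex, every element of which has the same support as $\nu$ and hence maximal span $h$, and joint continuity of $(t,P)\mapsto\abs{\sum_b P_b e^{it(a+bh)}}$ on this compact set times $\{\delta\le\abs{t}\le\pi/h\}$ yields the uniform $c(\beta,\delta)$. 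Alternatively, you can dispense with compactness entirely via the explicit bound $1-\abs{\E^{\omega}_x(e^{itf})}^2=\sum_{m,n}\bbP^{\omega}_x(f=m)\bbP^{\omega}_x(f=n)\lf 1-\cos(t(m-n))\ri\ge e^{-4\beta\vertiii{\Phi}}\lf 1-\abs{\nu(e^{itf})}^2\ri$, which together with Lemma \ref{lemmadt} applied to $\nu$ gives a uniform constant for every $\beta>0$; this makes (b) as short as the paper's version while supplying the uniformity that both the paper and your first attempt leave implicit.
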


\begin{proof}
Estimate (\ref{variancepx}) follows from
\be
p^{\omega}_x(\sigma_x) \ge \frac{e^{-2\beta \vertiii{\Phi}}}{\lambda(E)}.
\ee
Thus,
\be
\E^{\omega}_x((f\circ \theta_x(\sigma))^2)\ge \frac{e^{-2\beta \vertiii{\Phi}}}{\lambda(E)}\lambda(f^2) :=d(\beta).
\ee

Note that
\be
\E^{\omega}_x(e^{itf\circ \theta_x(\sigma)})=\sum_{n=-\infty}^{\infty}e^{itn}\mathbb{P}^{\omega}_x(f=n),
\ee
where $\mathbb{P}^{\omega}_x(A)=\int_A p^{\omega}_x(\sigma_x)\lambda(\diff \sigma_x)$ for all $A\in \mathscr{E}$. Then (\ref{momentpx}) is obtained from Lemma \ref{lemmadt}.

To prove (\ref{momentpx2}), define the probability measure $\nu(A)=\lambda(A)/\lambda(E)$ for all $A\in \mathcal{E}$. By Lemma \ref{lemmadt}, there exists $c_0>0$ such that $\abs{\nu\lf e^{itf\circ \theta_x(\sigma)}\ri}<e^{-c_0}$ for all $\delta\le \abs{t}\le \pi/h$. 
Define the set
\be
\mathfrak{A}=\{a+bh\in \mathbb{Z}: b\in \mathfrak{B}\}.
\ee

Choose $\varepsilon>0$ such that $\varepsilon + e^{-c_0}<1$.
Since
\be
e^{-2\beta \vertiii{\Phi}}\nu(A)\le \mathbb{P}^{\omega}_x(A) \le e^{2\beta \vertiii{\Phi}}\nu(A)
\ee
for every $A\in \mathcal{E}$, and $f$ is bounded, there exists $\beta'_{\delta}>0$ such that, for every $\beta<\beta'_{\delta}$,
\be
\abs{\mathbb{P}^{\omega}_x(f=n) - \nu(f=n)}<\frac{\varepsilon}{2\lVert f\rVert +1}
\ee
for every $n\in \mathfrak{A}$. Thus,
\begin{align*}
\abs{\sum_{n=-\infty}^{\infty}e^{itn}\mathbb{P}^{\omega}_x(f=n)} &\le 
\sum_{n\in \mathfrak{A}}\abs{\mathbb{P}^{\omega}_x(f=n) -\nu(f=n) } + \abs{\nu\lf e^{itf\circ \theta_x(\sigma)}\ri}\\
&\le \varepsilon+e^{-c_0}\\
&=e^{-c}
\end{align*}
for some $c=c(\beta'_{\delta},\delta)>0$.
\end{proof}

\subsection{Proof of Theorem \ref{main} Condition (\ref{hightemp})}

\begin{lemma}\label{lem22}
Suppose that $\Phi$ is a translation invariant and absolutely summable potential, and $f:\Omega \to \mathbb{Z}$ is a bounded $\scrf_0$-measurable function satisfying $\Var_{\lambda}(f)>0$. There exists $\delta>0$ and $\beta(\delta)>0$, and a positive constant $D=D(\delta,\beta(\delta))$, not depending on $\omega_k$ and $\Lambda_k$, such that, for every $\beta<\beta(\delta)$ and $\abs{t}< \delta\sqrt{D_k}$,
\be
\abs{\mu^{\omega_k}_{\Lambda_k,\beta}\left(\exp \left(i t \bar{S}_k\right) \right)} \leq \exp \left(-t^{2}D\frac{\abs{\Lambda_{k}}}{D_{k}}\right).
\ee
\end{lemma}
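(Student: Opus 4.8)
The plan is to use the cluster-expansion representation from Theorem \ref{clusterexpansionthm}. By \eqref{removebar}, we have $\mu^{\omega_k}_{\Lambda_k,\beta}(\exp(it\bar S_k)) = \mu^{\omega_k}_{\Lambda_k,\beta}(\exp(\tfrac{it}{\sqrt{D_k}}S_k))$, and by definition of the finite-volume Gibbs measure this quantity equals $Z^{\omega_k}_{\Lambda_k,\beta,t}/Z^{\omega_k}_{\Lambda_k,\beta,0}$. Using \eqref{campanino2}, the prefactor $\prod_{x\in\Lambda_k}\int_E e^{-\beta h_x^{\omega_k}(\sigma_x)}\lambda(\diff\sigma_x)$ cancels, leaving $\mu^{\omega_k}_{\Lambda_k,\beta}(\exp(it\bar S_k)) = \Xi^{\omega_k}_{\Lambda_k,\beta,t}/\Xi^{\omega_k}_{\Lambda_k,\beta,0}$. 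Now fix $\delta$ small enough that all conclusions of Theorem \ref{clusterexpansionthm} hold (taking the minimum of the various thresholds), and $\beta < \min\{\beta_\delta,\beta_0\}$; then for $0<\abs t<\delta\sqrt{D_k}$ both $\Xi^{\omega_k}_{\Lambda_k,\beta,t}$ and $\Xi^{\omega_k}_{\Lambda_k,\beta,0}$ admit the absolutely convergent cluster expansions \eqref{cluster}, so
\be
\log\abs{\mu^{\omega_k}_{\Lambda_k,\beta}(\exp(it\bar S_k))} = \mathrm{Re}\sum_{n=1}^\infty\Bigl(\sum_{(R_1,\ldots,R_n)\in\mathcal R^n}\phi^T(R_1,\ldots,R_n)\prod_{i=1}^n\zeta^t_\beta(R_i) - \sum_{(R_1,\ldots,R_n)\in\mathcal R^n_2}\phi^T(R_1,\ldots,R_n)\prod_{i=1}^n\zeta_\beta(R_i)\Bigr).
\ee

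Next I would split this difference according to whether a term involves at least one polymer with $\abs{\tilde R_i}>1$. Terms in which every $R_i$ lies in $\mathcal R_2$ contribute identically at parameter $t$ and at parameter $0$ (since $\zeta^t_\beta(R)=\zeta_\beta(R)$ whenever $\gamma^1_R=\emptyset$), so they cancel. What survives is exactly a sum over $\mathcal L^n$, i.e. over $n$-tuples with at least one single-point polymer. The key quantitative input is that for a single-point polymer $\zeta^t_\beta(\{x\}) = \E^{\omega_k}_x(e^{itf\circ\theta_x(\sigma)/\sqrt{D_k}}) - 1$, and by a second-order Taylor estimate together with Proposition \ref{prop1}(a) — which gives $\E^{\omega_k}_x((f\circ\theta_x)^2)\ge d(\beta)>0$ uniformly in $\omega_k$ and $\Lambda_k$ — one gets, for $\abs t<\delta\sqrt{D_k}$ with $\delta$ small,
\be
\mathrm{Re}\,\zeta^t_\beta(\{x\}) = \mathrm{Re}\,\bigl(\E^{\omega_k}_x(e^{itf\circ\theta_x/\sqrt{D_k}})-1\bigr) \le -c_1\frac{t^2}{D_k}
\ee
for some $c_1=c_1(\beta)>0$, while still $\abs{\zeta^t_\beta(\{x\})}\le \delta\lVert f\rVert$. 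So each surviving cluster carries at least one factor bounded by $-c_1 t^2/D_k$ in real part (the $n=1$, single-point terms directly) or is bounded in modulus by a constant times $t^2/D_k$ times the geometric weight controlled by \eqref{velenik2}; isolating the leading $n=1$ single-point contribution $\sum_{x\in\Lambda_k}\mathrm{Re}\,\zeta^t_\beta(\{x\}) \le -c_1\abs{\Lambda_k}t^2/D_k$ and absorbing all higher-order clusters (which are $O(t^2\abs{\Lambda_k}/D_k)$ with a constant that can be made small by shrinking $\delta$ and $\beta$, using the convergence bound \eqref{velenik2}) gives $\log\abs{\mu^{\omega_k}_{\Lambda_k,\beta}(\exp(it\bar S_k))} \le -D\,t^2\abs{\Lambda_k}/D_k$ with $D = c_1/2$, say, after choosing $\delta,\beta(\delta)$ appropriately. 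The case $t=0$ is trivial.

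The main obstacle is the bookkeeping needed to show that, after cancellation of the all-$\mathcal R_2$ clusters, every remaining cluster in the difference is genuinely second-order small in $t/\sqrt{D_k}$ with a total constant that can be dominated by the leading negative term. Concretely: a cluster in $\mathcal L^n$ may contain many single-point polymers, each contributing a factor of size $\delta\lVert f\rVert$ (not itself small as $t\to 0$), so one must argue that at least one such factor can be replaced by its Taylor bound $O(t^2/D_k)$ while the rest are absorbed into the absolutely convergent sum \eqref{velenik2} — this requires re-running the estimate of Proposition \ref{theoremlemma1}/Theorem \ref{clusterexpansionthm} with the constant $C=\delta\lVert f\rVert$ but extracting one distinguished vertex, in the spirit of the pinned cluster expansion (Theorem 5.4 in \cite{FV}). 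Making the extracted constant $c_1$ uniform in $\omega_k$ and $\Lambda_k$ relies precisely on the uniform lower bound in Proposition \ref{prop1}(a); the rest is routine.
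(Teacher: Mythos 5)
Your overall architecture matches the paper's: represent the characteristic function as $\Xi^{\omega_k}_{\Lambda_k,\beta,t}/\Xi^{\omega_k}_{\Lambda_k,\beta,0}$ via Theorem \ref{clusterexpansionthm}, extract a negative leading contribution from the single-site polymers using Proposition \ref{prop1}(a), and control the remaining clusters with Proposition \ref{theoremlemma1}. However, the steps you defer or assert are precisely where the lemma lives, and as stated they do not go through. First, a cluster containing exactly one single-point polymer together with polymers from $\mathcal{R}_2$ is \emph{not} bounded in modulus by a constant times $t^{2}/D_k$: one only has $\abs{\zeta^t_{\beta}(\{x\})}\le \abs{t}\lVert f\rVert/\sqrt{D_k}$, because $\operatorname{Re}\zeta^t_{\beta}(\{x\})$ is second order but $\operatorname{Im}\zeta^t_{\beta}(\{x\})$ is first order in $t$. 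To make such clusters genuinely second order you must either pass to real parts (exploiting that the $\mathcal{R}_2$-activities are real) or, as the paper does, Taylor-expand in $t$ to second order with remainder at an intermediate point and differentiate the cluster series termwise, using the bounds $\abs{\zeta^t_{\beta}(\{y\})}\le\delta\lVert f\rVert$, $\abs{\tfrac{d}{dt}\zeta^t_{\beta}(\{y\})}\le\lVert f\rVert/\sqrt{D_k}$, $\abs{\tfrac{d^2}{dt^2}\zeta^t_{\beta}(\{y\})}\le\lVert f\rVert^2/D_k$, the product-rule estimate with $\sum_i\abs{\tilde{\gamma}^1_{R_i}}\le\prod_i 2^{\abs{\tilde{\gamma}^1_{R_i}}}$, and then Proposition \ref{theoremlemma1} with the inflated weight $C=4\delta\lVert f\rVert$. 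This ``distinguished factor'' bookkeeping is the technical core of the paper's proof of Lemma \ref{lem22}; declaring it routine leaves the lemma unproved, and (\ref{velenik2}) alone (which yields only $\alpha_{\delta,\beta}\abs{\Lambda_k}$, with no factor $t^2/D_k$) cannot substitute for it.

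Second, your claim that all non-leading clusters carry constants that can be made small by shrinking $\delta$ and $\beta$ already fails for the pure single-site cluster with $n=2$: its contribution to $\log\abs{\Xi_t/\Xi_0}$ is $\operatorname{Re}\bigl(-\tfrac12\zeta^t_{\beta}(\{x\})^2\bigr)=-\tfrac12\bigl(\operatorname{Re}\zeta^t_{\beta}(\{x\})\bigr)^2+\tfrac12\bigl(\operatorname{Im}\zeta^t_{\beta}(\{x\})\bigr)^2$, and $\tfrac12\bigl(\operatorname{Im}\zeta^t_{\beta}(\{x\})\bigr)^2\approx\tfrac{t^2}{2D_k}\bigl(\E^{\omega}_x(f\circ\theta_x)\bigr)^2$, which is of exactly the same order as your leading term $\operatorname{Re}\zeta^t_{\beta}(\{x\})\le-c_1t^2/D_k$ and is not made small by shrinking $\delta$ or $\beta$ (take $f$ with nonzero mean under $\lambda$). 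It can only be beaten by using the structure of the single-site terms—second moment versus squared mean—which is why the paper separately establishes the sign property $\operatorname{Re}\bigl(\bigl(\tfrac{d}{dt}\zeta^t_{\beta}(\{y\})\bigr)^2\bigr)\le 0$ before absorbing the remaining pieces of the $n=2$ term; a pure smallness argument at this point would fail. So the outline is the right one, but both the pinned-cluster estimate and the treatment of the $n=2$ single-site term need to be carried out, essentially as in the paper's proof.
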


\begin{proof}

By Theorem \ref{clusterexpansionthm}, there exists $\delta_0>0$ such that, for every $0<\delta<\delta_0$, choose $\beta_{\delta}>0$ and $\beta_0>0$ such that Equation (\ref{cluster}) holds for every $\beta<\min\{\beta_{\delta},\beta_0\}$ and $\abs{t}<\delta\sqrt{D_k}$. Using Equation (\ref{removebar}),  by Taylor Remainder Theorem, there exists $0<\theta< \delta \sqrt{D_k}$ such that 

\begin{align*}
 &\abs{\mu^{\omega_k}_{\Lambda_k,\beta}\left(\exp \left(i t \bar{S}_k\right) \right)}=\abs{ \mu^{\omega_k}_{\Lambda_k,\beta} \left(\exp \left(\frac{i t}{ \sqrt{D_{k}}} \sum_{x \in \Lambda_{k}}f\circ \theta_x(\sigma)\right) \right) }\nonumber \\
&\leq \exp \left(\frac{t^{2}}{2} \sum_{y\in \Lambda_k}  \operatorname{Re} \frac{d^{2}}{d t^{2}} \zeta^t_{\beta}(\{y\})\Bigg\vert_{t=\theta} \right)
\cdot \exp \lf \frac{t^2}{4}\sum_{y\in \Lambda_k}\operatorname{Re}\frac{d^{2}}{d t^{2}} \zeta^t_{\beta}(\{y\})^2\Bigg\vert_{t=\theta} \ri\\
&\quad \cdot \exp\lf \frac{t^2}{2} \sum_{n=3}^{\infty}\sum_{y\in \Lambda_k} \abs{\phi^T(\{y\},\ldots,\{y\})}\abs{\frac{d^{2}}{d t^{2}} \zeta^t_{\beta}(\{y\})^n\Bigg\vert_{t=\theta}  } \ri\\
&\quad \cdot \exp \left(\frac{t^{2}}{2}\sum_{n=1}^{\infty}\sum_{(R_{1}, \ldots, R_{n})\in \mathcal{L}^n}\abs{\phi^{T}\left(R_{1}, \ldots, R_{n}\right)}\abs{\frac{d^{2}}{d t^{2}} \prod_{i=1}^{n} \zeta^t_{\beta}\left(R_{i}\right)\Bigg\vert_{t=\theta} }\right).
\end{align*}

Differentiating the activity functions, there exists $\delta_1<\delta_0$ such that, for every $\delta<\delta_1$, for each $y\in \Lambda_k$, we have
\be
\operatorname{Re}\lf \lf\frac{d}{dt}\zeta^t_{\beta}(\{y\}) \ri^2 \ri \le -\frac{\cos(2\delta \lVert f\rVert)}{D_k}\lf \mathbb{E}_y^{\omega}(f\circ \theta_y)\ri^2\le 0
\ee
and
\be
\abs{\zeta^t_{\beta}(\{y\})}\le \delta\lVert f\rVert, \quad \abs{\frac{d}{dt}\zeta^t_{\beta}(\{y\})}\le \frac{\lVert f\rVert}{\sqrt{D_k}}, \quad \abs{\frac{d^2}{dt^2}\zeta^t_{\beta}(\{y\})}\le \frac{\lVert f\rVert^2}{D_k}.
\ee
These bounds imply
\be
\operatorname{Re}\frac{d^{2}}{d t^{2}} \zeta^t_{\beta}(\{y\})^2\Bigg\vert_{t=\theta} \le 2\delta\frac{\lVert f \rVert^3}{D_k}
\ee
and, for all $n\ge 3$,
\be
\abs{\frac{d^{2}}{d t^{2}} \zeta^t_{\beta}(\{y\})^n\Bigg\vert_{t=\theta}} \le \frac{\lVert f\rVert^2}{D_k} n\lf (n-1)(\delta\lVert f\rVert)^{n-2}+(\delta \lVert f\rVert)^{n-1}\ri.
\ee
Using Equation (\ref{phiy}), the series
\be
B(\delta):=\sum_{n=3}^{\infty}\lf (n-1)(\delta\lVert f\rVert)^{n-2}+(\delta \lVert f\rVert)^{n-1}\ri
\ee
is convergent for $\delta<\lVert f\rVert^{-1}$ and $B(\delta)\to 0$ when $\delta\to 0$.

Let us recall the function $\hat{\zeta}_{\beta}$ in (\ref{hatzeta}). There exists $\delta_2<\min\{\delta_1,\lVert f\rVert^{-1}\}$ such that, for every $\delta<\delta_2$ and for a every polymer $R\in \mathcal{R}$, since $\abs{\xi^t_{\{y\},\beta}(\sigma)}\le \delta \lVert f \rVert$ for every $\{y\}\in \gamma^1_R$ and $0<\abs{t}<\delta\sqrt{D_k}$, we have
\be\label{der0}
\abs{\zeta^t_{\beta}(R)}\le (\delta \lVert f \rVert)^{\abs{\tilde{\gamma}^1_R}}\hat{\zeta}_{\beta}(\gamma^2_R).
\ee
Since
\begin{align*}
\abs{\frac{d}{dt}\prod_{\{y\}\in \gamma^1_R}\xi^t_{\{y\},\beta}(\sigma_y)}
&\le \frac{\lVert f\rVert}{\sqrt{D_k}}\abs{\tilde{\gamma}^1_R}(\delta\lVert f\rVert)^{\abs{\tilde{\gamma}^1_R}-1},\\
\abs{\frac{d^2}{dt^2}\prod_{\{y\}\in \gamma^1_R}\xi^t_{\{y\},\beta}(\sigma_y)}
&\le
\frac{\lVert f\rVert^2}{D_k}\abs{\tilde{\gamma}^1_R}\lf (\abs{\tilde{\gamma}^1_R}-1)(\delta\lVert f\rVert)^{\abs{\tilde{\gamma}^1_R}-2}+(\delta\lVert f\rVert)^{\abs{\tilde{\gamma}^1_R}-1}\ri,
\end{align*}
we have, respectively, the following bounds,
\begin{align*}
\abs{\frac{d}{dt}\zeta^t_{\beta}(R)}
&\le \frac{\lVert f\rVert}{\sqrt{D_k}} \abs{\tilde{\gamma}^1_R}(\delta\lVert f\rVert)^{\abs{\tilde{\gamma}^1_R}-1} \hat{\zeta}_{\beta}(\gamma^2_R),\\
\abs{\frac{d^2}{dt^2}\zeta^t_{\beta}(R)}
&\le \frac{\lVert f\rVert^2}{D_k}\abs{\tilde{\gamma}^1_R}\lf (\abs{\tilde{\gamma}^1_R}-1)(\delta\lVert f\rVert)^{\abs{\tilde{\gamma}^1_R}-2}+(\delta\lVert f\rVert)^{\abs{\tilde{\gamma}^1_R}-1}\ri\hat{\zeta}_{\beta}(\gamma^2_R).
\end{align*}

Therefore, using
\be
\sum_{i=1}^n \abs{\tilde{\gamma}^1_{R_i}}\le \prod_{i=1}^n 2^{\abs{\tilde{\gamma}^1_{R_i}}},
\ee
the following expression is an upper bound for $\abs{\frac{d^2}{dt^2}\prod_{i=1}^n \zeta^t_{\beta}(R_i)}$,
\begin{align*}
& \sum_{i=1}^n \left[ \sum_{\substack{j=1 \\ j\neq i}}^n  \abs{\frac{d}{dt}\zeta^t_{\beta}(R_i)}\abs{\frac{d}{dt}\zeta^t_{\beta}(R_j)} \lf \prod_{\substack{\ell=1 \\ \ell\neq i,j}}^n \abs{\zeta^t_{\beta}(R_\ell)} \ri+ \abs{\frac{d^2}{dt^2}\zeta^t_{\beta}(R_i)}\prod_{\substack{\ell=1 \\ \ell\neq i}}^n \abs{\zeta^t_{\beta}(R_\ell)}  \right] \\
&\le \frac{\lVert f\rVert^2}{D_k}\lf\prod_{i=1 }^n (\delta\lVert f \rVert)^{\abs{\tilde{\gamma}^1_{R_i}}}\hat{\zeta}_{\beta}(\gamma^2_{R_i}) \ri \Bigg[ (\delta\lVert f \rVert)^{-2}\lf \sum_{i=1}^n \abs{\tilde{\gamma}^1_{R_i}} \ri^2 \\
& \quad  +\lf(\delta\lVert f \rVert)^{-1} - (\delta\lVert f \rVert)^{-2}\ri\sum_{i=1}^n \abs{\tilde{\gamma}^1_{R_i}} \Bigg]\\
&\le 
\frac{\lVert f\rVert^2}{D_k}(\delta\lVert f \rVert)^{-1} \prod_{i=1}^n (4\delta\lVert f\rVert)^{\abs{\tilde{\gamma}^1_{R_i}}}\hat{\zeta}_{\beta}(\gamma^2_{R_i}).
\end{align*}

We can apply Proposition \ref{theoremlemma1} with the choice of $C=4\delta \lVert f\rVert$. There exists $\delta_3<\delta_2$ such that, for every $\delta<\delta_3$ we have $Ce<1$, and there exists $\beta_C<\min\{\beta_{\delta},\beta_0\}$ such that, for every $\beta<\beta_{C}$,
\begin{align*}
&\sum_{n=1}^{\infty}\sum_{(R_{1}, \ldots, R_{n})\in \mathcal{L}^n}\abs{\phi^{T}\left(R_{1}, \ldots, R_{n}\right)}\abs{\frac{d^{2}}{d t^{2}} \prod_{i=1}^{n} \zeta^t_{\beta}\left(R_{i}\right)\Bigg\vert_{t=\theta}}\\
&\le \frac{\lVert f\rVert^2}{D_k}(\delta\lVert f \rVert)^{-1}\sum_{n=1}^{\infty}\sum_{(R_{1}, \ldots, R_{n})\in \mathcal{L}^n}\abs{\phi^{T}\left(R_{1}, \ldots, R_{n}\right)} \prod_{i=1}^{n} (4\delta \lVert f\rVert)^{\abs{\tilde{\gamma}^1_{R_i}}}\hat{\zeta}_{\beta}(\gamma^2_i)\\
&\le \frac{\lVert f\rVert^2}{D_k}(\delta\lVert f \rVert)^{-1} a_{\beta} \abs{\Lambda_k}.
\end{align*}

Applying Proposition \ref{prop1} item (a), for every $\beta<\beta_C$, we get the following inequality,
\be
\operatorname{Re} \frac{d^{2}}{d t^{2}} \zeta^t_{\beta}(\{y\})\Bigg\vert_{t=\theta}
\le -\frac{\cos(\delta \lVert f\rVert)}{D_k}d(\beta_C) \qquad \text{for every } y\in \Lambda_k.
\ee
We have
\be
\abs{ \mu^{\omega_k}_{\Lambda_k,\beta} \left(\exp \left(\frac{i t}{ \sqrt{D_{k}}} \sum_{x \in \Lambda_{k}}f\circ \theta_x(\sigma)\right) \right) }
\le \exp\lf -t^2 D \frac{\abs{\Lambda_k}}{D_k} \ri,
\ee
where we can choose $\delta<\delta_3$ sufficiently small so that the constant $D$ defined as
\be
D=\frac{1}{2}\lf \cos(\delta \lVert f\rVert)d(\beta_C) -\delta\lVert f\rVert^3 - B(\delta)\lVert f\rVert^2 -\lVert f\rVert^2(\delta\lVert f \rVert)^{-1} a_{\beta} \ri>0
\ee
is positive for a sufficiently small $\beta=\beta(\delta)<\beta_C$.
\end{proof}

\begin{lemma}\label{lem23}
Suppose that $\Phi$ is a translation invariant and absolutely summable potential, and $f:\Omega \to \mathbb{Z}$ is a bounded $\scrf_0$-measurable function satisfying $\Var_{\lambda}(f)>0$ and $f\circ \theta_x$ is a lattice distributed random variable for every $x\in \mathbb{Z}^d$. For every $0<\delta<\pi/h$ there exist $\beta(c)>0$ and  a positive constant $C=C(\delta,\beta(c))$, not depending on $\omega_k$ and $\Lambda_k$, such that, for every $\beta<\beta(c)$ and $\delta \sqrt{D_{k}} \leq \abs{t} \leq \pi \sqrt{D_{k}}/h$,
\begin{equation}\label{eq:lem23}
\abs{\mu^{\omega_k}_{\Lambda_k,\beta}\left(\exp \left(i t\bar{S}_k\right)\right)} \leq \exp \left(-C\abs{\Lambda_{k}}\right).
\end{equation}
\end{lemma}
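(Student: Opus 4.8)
The plan is to adapt the argument of Lemma~2.3 in \cite{CCT}: reduce the characteristic function to a ratio of polymer partition functions and extract the exponential decay from the single-site factors rather than from a cluster expansion of the full partition function (which is unavailable because $\abs{t}$ is large here). By \eqref{removebar} and the factorization \eqref{campanino2} applied at $t$ and at $0$, the common prefactor $\prod_{x\in\Lambda_k}\int_E e^{-\beta h^{\omega}_x(\sigma_x)}\lambda(\diff\sigma_x)$ cancels, so $\abs{\mu^{\omega_k}_{\Lambda_k,\beta}(\exp(it\bar S_k))}=\abs{\Xi^{\omega}_{\Lambda_k,\beta,t}}/\Xi^{\omega}_{\Lambda_k,\beta,0}$. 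For the denominator I would use \eqref{tzero}--\eqref{cluster} in the case $t=0$: for $\beta$ small, $\Xi^{\omega}_{\Lambda_k,\beta,0}$ is a positive real number equal to $\exp$ of a signed series of modulus at most $\alpha_\beta\abs{\Lambda_k}$, hence $\Xi^{\omega}_{\Lambda_k,\beta,0}\ge e^{-\alpha_\beta\abs{\Lambda_k}}$ with $\alpha_\beta\to0$. So everything reduces to proving $\abs{\Xi^{\omega}_{\Lambda_k,\beta,t}}\le e^{-c'\abs{\Lambda_k}}$ with $c'$ uniform in $k$ and $\omega_k$ and exceeding $\alpha_\beta$, for $\delta\sqrt{D_k}\le\abs{t}\le\pi\sqrt{D_k}/h$.

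\emph{Resumming the isolated singletons.} Every polymer $R\in\mathcal R$ is either an isolated singleton $R=\{\{x\}\}$ or satisfies $\abs{\tilde R}\ge2$; in the latter case $\gamma^2_R\neq\emptyset$, $\tilde R=\tilde\gamma^2_R$, and every singleton element of $R$ sits on a vertex of $\tilde\gamma^2_R$. Splitting a polymer configuration into its isolated singletons and its remaining polymers gives
\[
\Xi^{\omega}_{\Lambda_k,\beta,t}=\sum_{\substack{\{R_1,\ldots,R_n\}\ \text{disjoint}\\ \abs{\tilde R_i}\ge2}}\ \prod_{i=1}^{n}\zeta^t_\beta(R_i)\prod_{x\in\Lambda_k\setminus\bigcup_i\tilde R_i}\bigl(1+\zeta^t_\beta(\{x\})\bigr),
\]
where $1+\zeta^t_\beta(\{x\})=\E^{\omega}_x\bigl(e^{\,itf\circ\theta_x(\sigma)/\sqrt{D_k}}\bigr)$. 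Since $\delta\le\abs{t}/\sqrt{D_k}\le\pi/h$, Proposition~\ref{prop1}(c) (applied with its parameter equal to $t/\sqrt{D_k}$) gives $\beta'_\delta>0$ and $c=c(\delta,\beta'_\delta)>0$ with $\abs{1+\zeta^t_\beta(\{x\})}<e^{-c}$ for all $\beta<\beta'_\delta$, uniformly in $x$ and $\omega$. The $\abs{\Lambda_k}-\abs{\bigcup_i\tilde R_i}$ empty sites then contribute a factor of modulus at most $e^{-c(\abs{\Lambda_k}-\sum_i\abs{\tilde R_i})}$, and bounding $\abs{\zeta^t_\beta(R)}\le2^{\abs{\tilde\gamma^1_R}}\hat\zeta_\beta(\gamma^2_R)$ (using $\abs{e^{iu}-1}\le2$ and $\int_E p^{\omega}_x(\sigma_x)\,\lambda(\diff\sigma_x)=1$) yields, by disjointness of the $\tilde R_i$,
\[
\abs{\Xi^{\omega}_{\Lambda_k,\beta,t}}\ \le\ e^{-c\abs{\Lambda_k}}\sum_{\substack{\{R_1,\ldots,R_n\}\ \text{disjoint}\\ \abs{\tilde R_i}\ge2}}\ \prod_{i=1}^{n}\Bigl(2^{\abs{\tilde\gamma^1_{R_i}}}e^{c\abs{\tilde R_i}}\hat\zeta_\beta(\gamma^2_{R_i})\Bigr).
\]

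\emph{Controlling the remaining sum.} Dropping the disjointness constraint and using $\prod(1+a)\le e^{\sum a}$, the last sum is at most $\exp\bigl(\sum_{R:\,\abs{\tilde R}\ge2}2^{\abs{\tilde\gamma^1_R}}e^{c\abs{\tilde R}}\hat\zeta_\beta(\gamma^2_R)\bigr)$; summing over the singleton-decoration of a fixed connected $\gamma^2$ costs at most $3^{\abs{\tilde\gamma^2}}$, so the exponent is bounded by $\sum_{\gamma^2\in\mathcal R_2,\ \tilde\gamma^2\subseteq\Lambda_k}(3e^c)^{\abs{\tilde\gamma^2}}\hat\zeta_\beta(\gamma^2)\le\abs{\Lambda_k}\sup_{z}\sum_{\gamma^2\in\mathcal R_2:\,z\in\tilde\gamma^2}(3e^c)^{\abs{\tilde\gamma^2}}\hat\zeta_\beta(\gamma^2)$. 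Using $\hat\zeta_\beta(\gamma^2)\le\prod_{\{x,y\}\in\gamma^2}\lVert e^{-\beta\Phi_{\{x,y\}}}-1\rVert$ (uniformly in $\omega,\Lambda_k$) and the absolute summability of $\Phi$, the same Kotecký--Preiss computation as in the proof of Theorem~\ref{theoremeta} (i.e.\ Lemma~6.99 in \cite{FV}), with the exponential weight $e^{(1+c)\abs{\tilde\gamma^2}}$ there replaced by $e^{(c+\ln3)\abs{\tilde\gamma^2}}$, bounds this supremum by some $\varepsilon_\beta\to0$ as $\beta\to0$. Hence $\abs{\Xi^{\omega}_{\Lambda_k,\beta,t}}\le e^{-(c-\varepsilon_\beta)\abs{\Lambda_k}}$ and therefore $\abs{\mu^{\omega_k}_{\Lambda_k,\beta}(\exp(it\bar S_k))}\le e^{-(c-\varepsilon_\beta-\alpha_\beta)\abs{\Lambda_k}}$; choosing $\beta(c)$ below $\beta'_\delta$, below $\beta_0$, and below the threshold for the weight $3e^c$, and small enough that $\varepsilon_\beta+\alpha_\beta\le c/2$, the claim follows with $C=c/2$, independent of $\omega_k$ and $\Lambda_k$.

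\emph{Main difficulty.} The crux is the resummation: since $\abs{t}$ is not small, the single-site activities $\zeta^t_\beta(\{x\})$ are of order one and the cluster expansion of $\Xi^{\omega}_{\Lambda_k,\beta,t}$ from Theorem~\ref{clusterexpansionthm} is not available, so the exponential gain must be produced entirely by collecting the isolated singletons into $\prod_{x}\E^{\omega}_x(e^{\,itf\circ\theta_x/\sqrt{D_k}})$ and invoking Proposition~\ref{prop1}(c); one must then verify that the ``genuine'' polymers, each carrying at least one small bond weight $\lVert e^{-\beta\Phi}-1\rVert$ but possibly decorated by arbitrarily many $O(1)$ singleton weights, still satisfy a convergent high-temperature estimate — the decoration costing only the harmless factor $3^{\abs{\tilde\gamma^2}}$, absorbed into the weight in the Kotecký--Preiss-type bound.
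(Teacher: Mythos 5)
Your proposal is correct and follows essentially the same route as the paper's proof: extract a factor $e^{-c}$ per unoccupied site from Proposition \ref{prop1}(c), compensate the occupied sites by the weight $e^{c\abs{\tilde R}}\hat\zeta_\beta(R)$, control the resulting polymer sum by a high-temperature (FV Lemma 6.99--type) estimate, and divide by the $t=0$ partition function controlled via its cluster expansion. The only differences are cosmetic: the paper expands only the pair interactions so the phases are absorbed into an activity $\tilde\zeta^t_\beta$ with $\abs{\tilde\zeta^t_\beta(R)}\le\hat\zeta_\beta(R)$ (avoiding your $3^{\abs{\tilde\gamma^2}}$ decoration factor), and it bounds the numerator gas through Theorem \ref{theoremeta} rather than your direct $\prod(1+w)\le\exp(\sum w)$ estimate, both of which are equivalent in effect.
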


\begin{proof}

By Fubini's Theorem, let us write the partition function $Z^{\omega_k}_{\Lambda_k,\beta,t}$ as
\be
Z^{\omega_k}_{\Lambda_k,\beta,t}
=\left(\prod_{x\in \Lambda_k} \int_E e^{-\beta h_x^{\omega_k}(\sigma_x)}\lambda(\diff \sigma_x) \right)\tilde{\Xi}^{\omega_k}_{\Lambda_k,\beta,t}
\ee
where
\be
\tilde{\Xi}^{\omega_k}_{\Lambda_k,\beta,t}=\sum_{n=0}^{\infty}\sum_{\substack{\{R_1,\ldots,R_n\} \\ \tilde{R}_i\cap \tilde{R}_j =\emptyset, i\neq j}}\prod_{x\in \Lambda_k \setminus \cup_{i=1}^n \tilde{R}_i}\left[\mathbb{E}^{\omega }_x\left( \exp\left(it\frac{f\circ \theta_x}{\sqrt{D_k}}\right)\right)\right]\prod_{i=1}^n \tilde{\zeta}^t_{\beta}(R_i),
\ee
the polymers $R_i\in \mathcal{R}_2$, and the activity function $\tilde{\zeta}^t_{\beta}$ is defined by
\be
\tilde{\zeta}^t_{\beta}(R)=\int_{E^{\tilde{R}}} \prod_{x\in\tilde{R}}\left[\exp\left(it\frac{f\circ \theta_x}{\sqrt{D_k}}\right)p_x^{\omega_k}(\sigma_x)\right]\prod_{\{x,y\}\in R}\xi_{\beta,\{x,y\}}(\sigma)\prod_{x\in \tilde{R}} \lambda(\diff \sigma_x).
\ee
Note that $\abs{\tilde{\zeta}^t_{\beta}(R)}\le \hat{\zeta}_{\beta}(R)$ for every $t\in \mathbb{R}$, where $\hat{\zeta}_{\beta}$ is defined in (\ref{hatzeta}).

Consider $c>0$ from Proposition \ref{prop1}, item (c).
Thus, for every $t$ in the interval $\delta \sqrt{D_k} \leq \abs{t} \leq \pi \sqrt{D_k}/h$ and for every $\beta<\beta'_{\delta}$,
\be
\abs{\tilde{\Xi}^{\omega_k}_{\Lambda_k,\beta,t}}\le e^{-c\abs{\Lambda_k}}\lf 1+\sum_{n=1}^{\infty}\sum_{\substack{\{R_1,\ldots,R_n\} \\ \tilde{R}_i\cap \tilde{R}_j =\emptyset, i\neq j}}\prod_{i=1}^n\eta^{c}_{\beta}(R_i)\ri =e^{-c\abs{\Lambda_k}} \Xi^{\omega_k}_{\Lambda_k,\beta}\left(\eta^{c}_{\beta}\right),
\ee
where $\eta^c_{\beta}(R)=e^{c\abs{\tilde{R}}}\hat{\zeta}_{\beta}(R)$ and $\Xi^{\omega_k}_{\Lambda_k,\beta}(\eta^{c}_{\beta})$ is defined in (\ref{xiforeta}).

By Theorem \ref{theoremeta}, for every $\beta<\beta_c$, the function $\Xi^{\omega_k}_{\Lambda_k,\beta}(\eta^{c}_{\beta})$ can be written as in  (\ref{xietathm}).
Note that $\tilde{\Xi}^{\omega_k}_{\Lambda_k,\beta,0}=\Xi^{\omega_k}_{\Lambda_k,\beta}$, i.e., we take $t=0$ in (\ref{cluster}). By Theorem \ref{clusterexpansionthm}, there exists $\alpha_{\beta}<1$ that decreases to 0 when $\beta \to 0$ such that
 \be\label{alpha0}
\sum_{n=1}^{\infty} \sum_{(R_1,\ldots,R_n)\in \mathcal{R}^n_2} \abs{\phi^T(R_1,\ldots,R_n)} \prod_{i=1}^n \abs{\zeta_{\beta}(R_i)} \le \alpha_{\beta}\abs{\Lambda_k}.
 \ee
Therefore, by (\ref{removebar}), (\ref{etathm}), and (\ref{alpha0}),
\begin{align*}
&\abs{\mu^{\omega_k}_{\Lambda_k,\beta}\left(\exp \left(i t \bar{S}_k\right)\right)} \leq  e^{-c\abs{{\Lambda}_{k}}} \abs{\frac{\Xi^{\omega_k}_{\Lambda_k,\beta}\left(\eta^{c}_{\beta}\right)} {\Xi^{\omega_k}_{\Lambda_k,\beta}}} \\
&\le \exp\left(-c\abs{\Lambda_k}+\sum_{n=1}^{\infty}\sum_{(R_1,\ldots,R_n)\in \mathcal{R}^n_2} \abs{\phi^T(R_1,\ldots, R_n)} \abs{\prod_{i=1}^n\eta^c_{\beta}(R_i)-\prod_{i=1}^n\zeta_{\beta}(R_i)} \right)\\
&\le \exp\left(-c\abs{\Lambda_k}+\sum_{n=1}^{\infty}\sum_{(R_1,\ldots,R_n)\in \mathcal{R}^n_2} \abs{\phi^T(R_1,\ldots, R_n)} \left(\prod_{i=1}^n \eta^c_{\beta}(R_i)+\prod_{i=1}^n\abs{\zeta_{\beta}(R_i)} \right)\right)\\
&\le e^{(-c+\alpha_{\beta}+\bar{\alpha}_{c,\beta})\abs{\Lambda_k}}.
\end{align*}
Then, Estimate (\ref{eq:lem23}) holds when $\beta$ is sufficiently small.
\end{proof}

\subsection{Proof of Theorem \ref{main} Condition (\ref{samecampanino})}

The proof of Theorem \ref{main} condition (\ref{samecampanino}) is similar to the proof of the main result in \cite{CCT} since we apply their conditions for the potential to have absolute convergence of the series of the cluster expansion. The third and the fourth integrals in (\ref{diff}) will be small due to Lemmas \ref{lemcam1} and \ref{lemcam2} below, in which the proofs are similar to Lemmas \ref{lem22} and \ref{lem23}, respectively. Here, we are going to explain some computation already done in \cite{CCT} to be clear when we adapt to a family of potentials satisfying (\ref{campanino}).

Theorem \ref{cct} below is Theorem $3.2$ in \cite{CCT} that refers to the convergence of the series of the cluster expansion that is valid for a family of models. For this, let us list some definitions.

For the set of polymers $\mathcal{R}$ defined in Section \ref{ce} and an activity function $\kappa:\mathcal{R}\to \mathbb{C}$ such that $\abs{\kappa(R)}$ is bounded, define the partition function for a gas of polymers, with activity $\kappa$, and hardcore interaction in $\Lambda\Subset \mathbb{Z}^d$, by
\be
\Xi_{\Lambda}(\kappa)=1+\sum_{n=1}^{\infty}\sum_{\substack{\{R_1,\ldots,R_n\} \\ \tilde{R}_i\cap \tilde{R}_j=\emptyset, i\neq j}}\prod_{i=1}^n \kappa(R_i).
\ee

\begin{theorem}[Campanino, Capocaccia, Tirozzi -- CMP 1979]\label{cct}
Let $\Psi$ be a real, positive function on $\mathbb{Z}^d$ such that $\Psi(0)=1$,
\begin{equation}
\sum_{x\in \mathbb{Z}^d} \Psi(x)^{1/2}=K<\infty,
\end{equation}
and $z_0$ a positive number such that $\sqrt{z_0}K<1$. For $b\in\mathcal{P}_{1,2}$, define \be
\tilde{\Psi}(b)=
\begin{cases}
1 &\text{ if } \abs{b}=1 \\
\sup_{x,y\in b}\Psi(x-y) &\text{ if } b=\{x,y\}
\end{cases}.
\ee
Assume that
\begin{equation}
\abs{\kappa(R)}\le \prod_{b\in R}z_0 \tilde{\Psi}(b).
\end{equation}
Then, for every $x\in \mathbb{Z}^d$,
\be
\sum_{R: \tilde{R}\ni x}\abs{\kappa(R)}\le
\sum_{R: \tilde{R}\ni x} \prod_{b\in R}z_0 \tilde{\Psi}(b)\le
\frac{\sqrt{z_0}K}{1-\sqrt{z_0}K}:=B(z_0,K).
\ee
Moreover, if
\be
z_0\exp\lf B(\sqrt{z_0},K)\ri=C(z_0,K)<1,
\ee
then, for every polymer $R\in \mathcal{R}$,
\be\label{camcondition}
 \sum_{n=1}^{\infty}\sum_{\substack{(R_1,\ldots, R_n)\\ \exists R_i=R}}\abs{\phi^T(R_1,\ldots,R_n)}\prod_{i=1}^n \abs{\kappa(R_i)}\le \frac{C(z_0,K)}{1-C(z_0,K)}\frac{\abs{\kappa(R)}}{ \prod_{b\in R} \sqrt{z_0}}
\ee
and
\be
 \sum_{n=1}^{\infty}\sum_{\substack{(R_1,\ldots, R_n)\\ \exists R_i=R}}\abs{\phi^T(R_1,\ldots,R_n)}\prod_{i=1}^n \prod_{b\in R_i}z_0 \tilde{\Psi}(b)\le \frac{C(z_0,K)}{1-C(z_0,K)}\prod_{b\in R}  \sqrt{z_0}\tilde{\Psi}(b).
\ee
Under Condition (\ref{camcondition}), we have
\be
\Xi_{\Lambda}(\kappa) = \exp \lf \sum_{n=1}^{\infty}\sum_{(R_1,\ldots,R_n)} \phi^T(R_1,\ldots,R_n)\prod_{i=1}^n \kappa(R_i) \ri.
\ee
\end{theorem}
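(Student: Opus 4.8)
The plan is to prove the four assertions in the stated order: the per‑point summability bound by a walk representation of polymers, then the two cluster‑expansion estimates by combining the tree‑graph inequality with that bound, and finally the exponential formula as a standard consequence of absolute convergence.

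\emph{Step 1 (per‑point bound).} First I would establish $\sum_{R:\,\tilde R\ni x}\prod_{b\in R}z_0\tilde\Psi(b)\le B(z_0,K)$ via the walk representation of polymers from \cite{CCT}: one constructs an injective map sending a polymer $R$ with $x\in\tilde R$ to a walk based at $x$ running through the bonds of $R$, arranged so that the polymer weight $\prod_{b\in R}z_0\tilde\Psi(b)$ dominates the product over the walk's $\ell$ steps of a square‑root weight $z_0^{1/2}\Psi(\cdot)^{1/2}$ (singleton bonds being accounted for by $\Psi(0)=1$). Since $\sum_{z\in\mathbb{Z}^d}\Psi(z)^{1/2}=K$, summing that weight over the choices in each step costs a factor $\sqrt{z_0}K$, so the total over length‑$\ell$ walks is at most $(\sqrt{z_0}K)^\ell$; as $\sqrt{z_0}K<1$, summing the geometric series gives
\[
\sum_{R:\,\tilde R\ni x}\prod_{b\in R}z_0\tilde\Psi(b)\ \le\ \sum_{\ell\ge1}(\sqrt{z_0}K)^\ell\ =\ \frac{\sqrt{z_0}K}{1-\sqrt{z_0}K}\ =\ B(z_0,K),
\]
and the hypothesis $|\kappa(R)|\le\prod_{b\in R}z_0\tilde\Psi(b)$ transfers this to $\sum_{R:\,\tilde R\ni x}|\kappa(R)|$. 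Running the same computation with $z_0$ replaced by $\sqrt{z_0}$ yields $\sum_{R:\,\tilde R\ni x}\prod_{b\in R}\sqrt{z_0}\tilde\Psi(b)\le B(\sqrt{z_0},K)$, the quantity that feeds the constant $C(z_0,K)=z_0\exp\!\bigl(B(\sqrt{z_0},K)\bigr)$.

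\emph{Step 2 (cluster estimates).} For (\ref{camcondition}) I would use the Penrose--Rota tree‑graph inequality for a hard‑core gas,
\[
|\phi^T(R_1,\dots,R_n)|\ \le\ \frac{1}{n!}\sum_{T\in\mathcal{T}_n}\ \prod_{\{i,j\}\in T}\mathbbm{1}[\tilde R_i\cap\tilde R_j\neq\emptyset],
\]
where $\mathcal{T}_n$ is the set of labelled trees on $\{1,\dots,n\}$. Choosing which index equals the fixed polymer $R$ cancels a factor $n$ against $1/n!$; rooting the tree there and summing the remaining $R_j$ from the leaves inward, each summation ranges over polymers meeting their parent $R_i$, hence (bounding $\sum_{\tilde R_j\cap\tilde R_i\neq\emptyset}$ by $\sum_{z\in\tilde R_i}\sum_{\tilde R_j\ni z}$) costs at most $|\tilde R_i|$ times the per‑point quantity of Step~1, after distributing one factor $z_0^{1/2}$ to every bond — which is exactly why $|\kappa(R)|/\prod_{b}\sqrt{z_0}$ appears in the conclusion. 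The standard resummation over the shapes of the rooted tree, as in \cite{FV}, Theorem~5.4, absorbs the residual $|\tilde R_i|$ branching factors into $\exp\!\bigl(B(\sqrt{z_0},K)\bigr)$ per bond of $R$ and collapses the total into the geometric series $\sum_{m\ge1}C(z_0,K)^m=\frac{C(z_0,K)}{1-C(z_0,K)}$, which converges precisely because $C(z_0,K)<1$. Repeating the computation with every $|\kappa(R_i)|$ replaced by its majorant $\prod_{b\in R_i}z_0\tilde\Psi(b)$ — which obeys the same Step~1 bound — gives the third assertion verbatim.

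\emph{Step 3 (exponential formula) and the main obstacle.} Finally, the identity $\Xi_\Lambda(\kappa)=\exp\!\bigl(\sum_{n\ge1}\sum_{(R_1,\dots,R_n)}\phi^T(R_1,\dots,R_n)\prod_{i=1}^n\kappa(R_i)\bigr)$ is the classical exponential (Mayer) formula for a hard‑core polymer gas: it holds as an identity of formal power series in the activities, and the absolute convergence supplied by (\ref{camcondition}) — which dominates $\sum_{n\ge1}\frac{1}{n!}\sum_{(R_1,\dots,R_n)}|\phi^T(R_1,\dots,R_n)|\prod_{i=1}^n|\kappa(R_i)|$ over clusters inside $\Lambda$ — legitimises the rearrangement; I would quote this from \cite{FV}, Chapter~5, rather than reproduce it. The main obstacle is Step~2: keeping the two‑level bookkeeping straight (bonds inside polymers, polymers inside clusters) so that the per‑point estimate of Step~1 is inserted with the correct multiplicities, and carrying out the rooted‑tree resummation so that the $1/n!$ is consumed to produce exactly $\exp\!\bigl(B(\sqrt{z_0},K)\bigr)$ and not a cruder, non‑summable bound.
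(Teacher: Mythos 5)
A preliminary remark: the paper contains no proof of Theorem \ref{cct}; it is imported verbatim as Theorem 3.2 of \cite{CCT}, the only hint about its proof being the remark in the introduction that \cite{CCT} construct walks from polymers. So your proposal has to be judged on its own merits. Step 1 is sound and is in substance the CCT walk argument: a polymer containing $x$ is mapped injectively to a walk started at $x$ traversing each bond once or twice, and the weight comparison is legitimate because a bond used twice contributes $(\sqrt{z_0}\tilde\Psi(b)^{1/2})^2=z_0\tilde\Psi(b)$, while a bond used once is handled by $z_0\tilde\Psi(b)\le\sqrt{z_0}\tilde\Psi(b)^{1/2}$, valid since $\sqrt{z_0}\tilde\Psi(b)^{1/2}\le\sqrt{z_0}K<1$; summing over walk lengths then gives $B(z_0,K)$. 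Step 3 is the standard Mayer-expansion identity and is fine once absolute convergence is in hand.

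The genuine gap is Step 2, which you yourself flag as the main obstacle and then settle by assertion. The combination ``Penrose tree bound, pay $|\tilde R_i|$ times the per-point bound for each child, then quote \cite{FV} Theorem 5.4'' does not deliver the inequalities in the stated form. The Koteck\'y--Preiss/\cite{FV} criterion yields bounds of the shape (activity of $R$) times $e^{a(R)}$ under the self-consistency hypothesis $\sum_{R':\tilde R'\cap\tilde R\neq\emptyset}|\kappa(R')|e^{a(R')}\le a(R)$; verifying that hypothesis forces you to inflate the per-bond activity (the branching factor $|\tilde R_i|$ gets exponentiated vertex by vertex and redistributed over bonds), and what comes out is a smallness condition and constants of a different shape from $C(z_0,K)=z_0\exp(B(\sqrt{z_0},K))$, $\frac{C(z_0,K)}{1-C(z_0,K)}$ and $|\kappa(R)|/\prod_{b\in R}\sqrt{z_0}$. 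Your claim that the residual branching factors are ``absorbed into $\exp(B(\sqrt{z_0},K))$ per bond of $R$'' and that the whole sum ``collapses into $\sum_{m\ge1}C(z_0,K)^m$'' (with $m$ counting what?) is precisely the computation that is missing, and it is not a corollary of the cited theorem. In \cite{CCT} these particular constants arise by pushing the walk device of your Step 1 to the level of entire clusters: a tree of polymers is flattened into a single walk and the surplus $\sqrt{z_0}$ per bond pays for the attachment entropy; replacing that by a generic convergence criterion is exactly where the stated quantitative form is lost. (A further warning that these constants cannot drop out of soft arguments: the $n=1$ term of (\ref{camcondition}) alone forces $\frac{C(z_0,K)}{1-C(z_0,K)}\ge\prod_{b\in R}\sqrt{z_0}$, which fails for small $z_0$ when $R$ is a single pair, so even the transcription of the constants is delicate.) If all one needs is some absolutely convergent bound of the kind the present paper actually uses in Lemmas \ref{lemcam1} and \ref{lemcam2}, your route can be repaired at the price of different constants; but as a proof of Theorem \ref{cct} as stated, Step 2 is an appeal to authority at the only place where a proof is required.
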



For fixed positive integer $r_0$, define
\be
\mathbb{Z}^{d}(r_0)=\{(n_1r_0,\ldots,n_dr_0): n_1,\ldots, n_d\in \mathbb{Z}\}
\ee
to be the sublattice of $\mathbb{Z}^d$, and $\Lambda^{r_0}_k = \Lambda_k\cap \mathbb{Z}^d(r_0)$. The main idea in \cite{CCT} is to control the characteristic function $\mu^{\omega}_{\Lambda_k,\beta}(e^{it\bar{S}_k})$ by taking $r_0$ large enough, since the spins in $\Lambda^{r_0}_k$ will be ``almost independent". Thus, this approach does not require to consider large temperatures. For a configuration $\omega'\in \Omega$, let us bound the characteristic function as below,
\begin{align*}
&\abs{\mu^{\omega_k}_{\Lambda_k,\beta}\lf e^{it\bar{S}_k}\ri} 
= \abs{\mu^{\omega_k}_{\Lambda_k,\beta}\lf\mu^{\omega_k}_{\Lambda_k,\beta}\lf \exp\lf\frac{it}{\sqrt{D_k}}S_k\ri \Bigg\vert \sigma_x=\omega'_x, x\in\Lambda_k\setminus \Lambda^{r_0}_k\ri \ri }\\
&\le \sup_{\omega' \in \Omega}\abs{\mu^{\omega_k}_{\Lambda_k,\beta}\lf \exp\lf\frac{it}{\sqrt{D_k}}\sum_{x\in \Lambda^{r_0}_k}f\circ \theta_x(\sigma)\ri \Bigg\vert \sigma_x=\omega'_x, x\in\Lambda_k\setminus \Lambda^{r_0}_k\ri }.
\end{align*}

By spatial Markov property (see \cite{FV}, Section 3.6.3), we have
\begin{align*}
&\mu^{\omega_k}_{\Lambda_k,\beta}\lf \exp\lf \frac{it}{\sqrt{D_k}}\sum_{x\in \Lambda^{r_0}_k}f\circ \theta_x(\sigma)\ri \Bigg\vert \sigma_x=\omega'_x, x\in\Lambda_k\setminus \Lambda^{r_0}_k \ri\\
&= \mu^{(\omega_k\lor \omega')^{r_0}}_{\Lambda^{r_0}_k,\beta}\lf \exp\lf \frac{it}{\sqrt{D_k}}\sum_{x\in \Lambda^{r_0}_k}f\circ \theta_x(\sigma)\ri \ri,
\end{align*}
where the boundary condition $(\omega_k\lor \omega')^{r_0}$ is defined by
\be
(\omega_k\lor \omega')^{r_0}(x)=
\begin{cases}
\omega'(x) &\text{ if }x\in \Lambda_k\setminus \Lambda^{r_0}_k\\
\omega_k(x) &\text{ if }x\in \Lambda^c_k
\end{cases}.
\ee

\begin{lemma}\label{lemcam1}
Suppose that $\Phi$ is a translation invariant and absolutely summable potential satisfying Condition (\ref{campanino}), and $f:\Omega \to \mathbb{Z}$ is a bounded $\scrf_0$-measurable function satisfying $\Var_{\lambda}(f)>0$.
For every $\beta>0$, there exists $\delta_{\beta}>0$ such that, for every $\delta<\delta_{\beta}$, there exists $r_0(\delta,\beta)>0$ satisfying the following: For every $r_0\ge r_0(\delta,\beta)$, there exists a positive constant $D=D(\beta,\delta,r_0)$ such that, if $\abs{t}< \delta \sqrt{D_k}$, then
\be
 \abs{ \mu^{(\omega_k\lor \omega')^{r_0}}_{\Lambda^{r_0}_k,\beta}\lf \exp\lf \frac{it}{\sqrt{D_k}}\sum_{x\in \Lambda^{r_0}_k}f\circ \theta_x(\sigma)\ri \ri }
 \le \exp\lf -t^2D\frac{\abs{\Lambda^{r_0}_k}}{D_k} \ri
 \ee
 uniformly with respect to $\omega'\in \Omega$.
\end{lemma}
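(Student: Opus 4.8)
The plan is to mimic the proof of Lemma \ref{lem22}, but now working on the sublattice $\Lambda^{r_0}_k=\Lambda_k\cap\mathbb{Z}^d(r_0)$ with the boundary condition $(\omega_k\lor\omega')^{r_0}$, and to use the long-range decay condition (\ref{campanino}) together with Theorem \ref{cct} in place of the high-temperature cluster-expansion estimate of Proposition \ref{theoremlemma1}. First I would set up the polymer representation of the ``tilted'' partition function $Z^{(\omega_k\lor\omega')^{r_0}}_{\Lambda^{r_0}_k,\beta,t}$ exactly as in Theorem \ref{clusterexpansionthm}, but over polymers whose one-point blocks lie in $\Lambda^{r_0}_k$; the single-point activity is $\zeta^t_\beta(\{x\})=\E^\omega_x(e^{itf\circ\theta_x/\sqrt{D_k}})-1$ and the two-point activities carry a factor $e^{-\beta\Phi_{\{x,y\}}}-1$. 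The key point is that for the chosen coupling $\Psi(x)$ comparable to $\lVert\Phi_{\{x,0\}}\rVert$ (with $\Psi(0)=1$), Condition (\ref{campanino}) says $\sum_x\Psi(x)^{1/2}=K<\infty$, so for $\beta$ fixed one can pick $r_0$ large enough that, restricted to the sublattice $\mathbb{Z}^d(r_0)$, the effective two-point activity $\abs{e^{-\beta\Phi_{\{x,y\}}}-1}\le z_0\tilde\Psi(x-y)$ with $\sqrt{z_0}K<1$ and $z_0 e^{B(\sqrt{z_0},K)}<1$; this is exactly the hypothesis of Theorem \ref{cct}, whose conclusion gives the absolute convergence of the cluster expansion with the per-polymer bound (\ref{camcondition}) needed below.

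Next, following the Taylor-remainder estimate in Lemma \ref{lem22}, I would write $\log\abs{\mu^{(\omega_k\lor\omega')^{r_0}}_{\Lambda^{r_0}_k,\beta}(e^{(it/\sqrt{D_k})\sum_{x\in\Lambda^{r_0}_k}f\circ\theta_x})}$ as $\tfrac{t^2}{2}$ times the second $t$-derivative of the log of the polymer partition function evaluated at some $\theta$ with $\abs{\theta}<\delta\sqrt{D_k}$, and split it into: the diagonal one-point terms $\sum_{y\in\Lambda^{r_0}_k}\operatorname{Re}\frac{d^2}{dt^2}\zeta^t_\beta(\{y\})$, the $\zeta^t_\beta(\{y\})^2$ terms, the higher $\zeta^t_\beta(\{y\})^n$ ($n\ge3$) terms, and the remaining sum over $\mathcal{L}^n$ (tuples involving at least one two-point block). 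The first group is strictly negative and bounded above by $-\cos(\delta\lVert f\rVert)d(\beta)\abs{\Lambda^{r_0}_k}/D_k$ using Proposition \ref{prop1}(a) — note $d(\beta)$ only depends on $\vertiii{\Phi}$ and $\lambda(E)$, so it survives the restriction to the sublattice since the single-spin conditional densities $p^\omega_x$ still satisfy the same pointwise lower bound. The second and third groups are $O(\delta)$ and, after differentiating, bounded by $\lVert f\rVert^2/D_k$ times $2\delta\lVert f\rVert$ and a convergent series $B(\delta)\to0$, just as in Lemma \ref{lem22}. For the $\mathcal{L}^n$ remainder I would differentiate the product $\prod_i\zeta^t_\beta(R_i)$, bound each factor and its first two derivatives by $\delta\lVert f\rVert$, $\lVert f\rVert/\sqrt{D_k}$, $\lVert f\rVert^2/D_k$ times $\hat\zeta_\beta(\gamma^2_{R_i})$-type weights, use $\sum_i\abs{\tilde\gamma^1_{R_i}}\le\prod_i 2^{\abs{\tilde\gamma^1_{R_i}}}$ to convert to a product form, and then invoke the per-polymer decay bound (\ref{camcondition}) of Theorem \ref{cct} (summed over which polymer carries a two-point block) to conclude this piece is at most $\tfrac{\lVert f\rVert^2}{D_k}(\delta\lVert f\rVert)^{-1}$ times a constant $\times\abs{\Lambda^{r_0}_k}$ that tends to $0$ as $r_0\to\infty$ for fixed $\beta$.

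Collecting the four pieces, one gets $\log\abs{\cdots}\le -t^2\,\tfrac{\abs{\Lambda^{r_0}_k}}{D_k}\bigl(\tfrac12\cos(\delta\lVert f\rVert)d(\beta)-O(\delta)-O_{r_0}(1)\bigr)$, and choosing $\delta<\delta_\beta$ small and then $r_0\ge r_0(\delta,\beta)$ large makes the bracket a positive constant $D=D(\beta,\delta,r_0)$; since all the bounds are uniform in the boundary condition (the $p^\omega_x$ estimates, the activity bounds, and Theorem \ref{cct} do not see $\omega'$), this is uniform in $\omega'\in\Omega$, which is the claim. The main obstacle I anticipate is verifying that the sublattice-restriction genuinely buys the smallness of the two-point contribution: one must check that on $\mathbb{Z}^d(r_0)$ the relevant activity weight $\sum_{y\in\mathbb{Z}^d(r_0),y\neq x}\abs{e^{-\beta\Phi_{\{x,y\}}}-1}$, dominated via $\abs{e^{-\beta\Phi}-1}\le e^{\beta\lVert\Phi\rVert}\beta\lVert\Phi\rVert$ by a multiple of $\sum_{\lVert x\rVert\ge r_0}\lVert\Phi_{\{x,0\}}\rVert$ (and likewise its square root against $K$), can be made small enough to satisfy both $\sqrt{z_0}K<1$ and $z_0 e^{B(\sqrt{z_0},K)}<1$ for a single choice of $z_0$ — i.e.\ matching the bookkeeping of Theorem \ref{cct} with the $\beta$-dependent constants coming from the derivatives, all while keeping $K$ finite (which is exactly where the $1/2$-power summability in (\ref{campanino}) is essential and why the plain absolute summability used in Lemma \ref{lem22} is not enough here).
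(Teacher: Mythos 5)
Your proposal is correct and follows essentially the same route as the paper: the Taylor-remainder decomposition of Lemma \ref{lem22} (single-site terms via Proposition \ref{prop1}(a), the $\zeta^t_\beta(\{y\})^n$ series, and the $\mathcal{L}^n$ remainder), with the high-temperature bound of Proposition \ref{theoremlemma1} replaced by Theorem \ref{cct} applied to $\Psi(x)=\lVert\Phi_{\{x,0\}}\rVert/\bar{\Phi}(r_0)$ on the sublattice $\mathbb{Z}^d(r_0)$, where Condition (\ref{campanino}) gives $K<\infty$ and the activity bound $\abs{e^{-\beta\Phi_{\{x,y\}}}-1}\le\beta\lVert\Phi_{\{x,y\}}\rVert e^{\beta\bar{\Phi}(r_0)}$ yields $z_0=\max\{4\delta\lVert f\rVert,\beta\bar{\Phi}(r_0)e^{\beta\bar{\Phi}(r_0)}\}$ small for $\delta$ small and $r_0$ large. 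The bookkeeping issue you flag at the end is precisely what the paper's normalization of $\Psi$ by $\bar{\Phi}(r_0)$ and the choice of $z_0$ are designed to handle, so your plan matches the paper's proof in substance and in the order of choosing $\delta$ then $r_0$.
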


\begin{proof}
For a fixed $r_0>0$ and $\beta>0$, define $\bar{\Phi}(r_0):=\sup_{\lVert x\rVert\ge r_0}\lVert \Phi_{\{x,0\}}\rVert>0$. For every $x,y\in \mathbb{Z}^d(r_0)$, we have
\be
\abs{\xi_{\beta,\{x,y\}}(\sigma)}\le \beta \lVert \Phi_{\{x,y\}} \rVert e^{\beta \bar{\Phi}(r_0)}.
\ee
Define 
\be\label{psigood}
\Psi(x):=
\begin{cases}
\frac{\lVert \Phi_{\{x,0\}} \rVert}{\bar{\Phi}(r_0)} &\text{ if }x\in \mathbb{Z}^d(r_0)\setminus\{0\}\\
1 &\text{ if }x=0
\end{cases}.
\ee
Note that
\be
K=\sum_{x\in \mathbb{Z}^d(r_0)}\Psi(x)^{1/2}\le 1+ \frac{1}{\bar{\Phi}(r_0)^{1/2}}\sum_{x\neq 0}\lVert \Phi_{\{x,0\}}\rVert^{1/2}<\infty.
\ee
For fixed polymer $R\in \mathcal{R}$ with $\abs{\tilde{R}}\ge 2$, we have
\be\label{ccthatzeta}
\hat{\zeta}_{\beta}(\gamma^2_R) \le \prod_{\{x,y\}\in R}  \beta \lVert \Phi_{\{x,y\}} \rVert e^{\beta \bar{\Phi}(r_0)}.
\ee
Define
\be
z_0:=\max\left\{4\delta \lVert f\rVert, \beta \bar{\Phi}(r_0)e^{\beta \bar{\Phi}(r_0)} \right\}.
\ee
For every sufficiently small $\delta$ and $\abs{t}<\delta \sqrt{D_k}$,
\be
\abs{\frac{d^2}{dt^2} \prod_{i=1}^n \zeta^t_{\beta}(R_i)}
\le \frac{\lVert f\rVert^2}{D_k}(\delta \lVert f\rVert)^{-1}\prod_{i=1}^n \prod_{b\in R_i}z_0\tilde{\Psi}(b). 
\ee
By Theorem \ref{cct}, for $r_0$ sufficiently large,
\begin{align*}
&\sum_{n=1}^{\infty}\sum_{(R_1,\ldots,R_n)\in \mathcal{L}^n} \abs{\phi^T(R_1,\ldots,R_n)}\abs{\frac{d^2}{dt^2} \prod_{i=1}^n \zeta^t_{\beta}(R_i)}\\
&\le \frac{\lVert f\rVert^2}{D_k}(\delta \lVert f\rVert)^{-1} \sum_{\substack{R\in \mathcal{R} \\ \abs{\tilde{R}}\ge 2}} \sum_{n=1}^{\infty}\sum_{\substack{(R_1,\ldots,R_n) \\ \exists R_i=R}}\abs{\phi^T(R_1,\ldots,R_n)}\prod_{i=1}^n \prod_{b\in R_i}z_0\tilde{\Psi}(b)\\
&\le \frac{\lVert f\rVert}{D_k}\frac{z_0}{\delta} \frac{\exp(B(\sqrt{z_0},K))}{1-C(z_0,K)}B(\sqrt{z_0},K)\abs{\Lambda^{r_0}_k}.
\end{align*}

Note that the constant
\be
\varphi_{\beta}(\delta,r_0)= \lVert f \rVert\frac{z_0}{\delta} \frac{\exp(B(\sqrt{z_0},K))}{1-C(z_0,K)}B(\sqrt{z_0},K)
\ee
goes to zero when $z_0$ is sufficiently small. More precisely, for a fixed $\delta$,  such that, for every $r_0\ge r'_0(\delta,\beta)$, we have $z_0=4\delta \lVert f\rVert$. Then, $\varphi_{\beta}(\delta,r_0)\to 0$, when $\delta$ and $r^{-1}_0$ are small enough.  The rest of the proof follows the same as in the proof of Lemma \ref{lem22}. 

Thus, there exists $\delta_{\beta}$ such that, for $\delta>\delta_{
\beta}$ and $r_0(\delta,\beta)> r'_0(\delta,\beta)$ so that the constant $D$ defined as
\be
D=\frac{1}{2}\lf \cos(\delta \lVert f\rVert)d(\beta) -2\delta\lVert f\rVert^3 - B(\delta)\lVert f\rVert^2 - \varphi_{\beta}(\delta,r_0)\ri
\ee
is positive for every $r_0\ge r_0(\delta,\beta)$.
\end{proof}

\begin{lemma}\label{lemcam2}
Suppose that $\Phi$ is a translation invariant and absolutely summable potential satisfying the condition (\ref{campanino}), and $f:\Omega \to \mathbb{Z}$ is a bounded $\scrf_0$-measurable function satisfying $\Var_{\lambda}(f)>0$ and $f\circ \theta_x$ is a lattice distributed random variable for every $x\in \mathbb{Z}^d$.
For every $\beta>0$ and $\delta>0$, there exists $r_1(\delta,\beta)$ such that, for every $r_0\ge r_1(\delta,\beta)$, there exists a positive constant $C$, not depending on the sequence of the boundary condition $\omega_k$, such that, if $\delta \sqrt{D_k}\le \abs{t}\le \pi\sqrt{D_k}/h$, then
\be
\abs{ \mu^{(\omega_k\lor \omega')^{r_0}}_{\Lambda^{r_0}_k,\beta}\lf \exp\lf \frac{it}{\sqrt{D_k}}\sum_{x\in \Lambda^{r_0}_k}f\circ \theta_x(\sigma)\ri \ri } \le \exp\lf -C\abs{\Lambda^{r_0}_k} \ri
\ee 
uniformly with respect to $\omega'\in \Omega$.
\end{lemma}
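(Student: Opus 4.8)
The plan is to carry out the argument of Lemma \ref{lem23} on the sublattice $\mathbb{Z}^d(r_0)$, with the role played there by the high-temperature cluster expansion (Theorems \ref{clusterexpansionthm} and \ref{theoremeta}) now taken over by Theorem \ref{cct}, which applies at \emph{any} fixed temperature once $r_0$ is large enough, because then the free spins of $\Lambda^{r_0}_k$ are pairwise at distance $\ge r_0$. We may assume $0<\delta<\pi/h$, since otherwise the interval $\delta\sqrt{D_k}\le|t|\le\pi\sqrt{D_k}/h$ is empty.

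First I would expand the numerator by Fubini exactly as in Section \ref{ce} and in the proof of Lemma \ref{lem23}, but over $E^{\Lambda^{r_0}_k}$ and with polymers $R\in\mathcal{R}_2$ whose vertices lie in $\Lambda^{r_0}_k$:
\[
Z^{(\omega_k\lor\omega')^{r_0}}_{\Lambda^{r_0}_k,\beta,t}=\Bigl(\prod_{x\in\Lambda^{r_0}_k}\int_E e^{-\beta h^{(\omega_k\lor\omega')^{r_0}}_x(\sigma_x)}\lambda(\diff\sigma_x)\Bigr)\,\tilde{\Xi}^{(\omega_k\lor\omega')^{r_0}}_{\Lambda^{r_0}_k,\beta,t},
\]
where each site not covered by a polymer contributes the factor $\mathbb{E}^{(\omega_k\lor\omega')^{r_0}}_x\bigl(\exp(\tfrac{it}{\sqrt{D_k}}f\circ\theta_x)\bigr)$ and each polymer contributes $\tilde{\zeta}^t_{\beta}(R_i)$ with $|\tilde{\zeta}^t_{\beta}(R)|\le\hat{\zeta}_{\beta}(R)$. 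Since these single-site factors occur also in $Z^{(\omega_k\lor\omega')^{r_0}}_{\Lambda^{r_0}_k,\beta,0}$, the characteristic function equals $|\tilde{\Xi}^{(\omega_k\lor\omega')^{r_0}}_{\Lambda^{r_0}_k,\beta,t}/\Xi^{(\omega_k\lor\omega')^{r_0}}_{\Lambda^{r_0}_k,\beta}|$. Because $\delta\le|t|/\sqrt{D_k}\le\pi/h$ and $f\circ\theta_x$ is lattice distributed, Proposition \ref{prop1}(b) supplies a constant $c=c(\beta,\delta)>0$ --- \emph{fixed, independent of $r_0$ and of the boundary condition} --- with $|\mathbb{E}^{(\omega_k\lor\omega')^{r_0}}_x(\exp(\tfrac{it}{\sqrt{D_k}}f\circ\theta_x))|<e^{-c}$; correcting the over-counted uncovered sites by a factor $e^{c|\tilde{R}|}$ per polymer (the $\tilde{R}_i$ being disjoint) yields, just as in Lemma \ref{lem23}, $|\tilde{\Xi}^{(\omega_k\lor\omega')^{r_0}}_{\Lambda^{r_0}_k,\beta,t}|\le e^{-c|\Lambda^{r_0}_k|}\,\Xi_{\Lambda^{r_0}_k}(\eta^c_{\beta})$ with $\eta^c_{\beta}(R)=e^{c|\tilde{R}|}\hat{\zeta}_{\beta}(R)$.

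Next I would verify the hypotheses of Theorem \ref{cct} on the sublattice, following the proof of Lemma \ref{lemcam1}: with $\bar{\Phi}(r_0)=\sup_{\lVert x\rVert\ge r_0}\lVert\Phi_{\{x,0\}}\rVert>0$ and $\Psi$ as in (\ref{psigood}), one has $\hat{\zeta}_{\beta}(R)\le\prod_{\{x,y\}\in R}\beta\lVert\Phi_{\{x,y\}}\rVert e^{\beta\bar{\Phi}(r_0)}$, and since $|\tilde{R}|\le 2|R|$ this gives $\eta^c_{\beta}(R)\le\prod_{b\in R}z_0\tilde{\Psi}(b)$ with $z_0=e^{2c}\beta\bar{\Phi}(r_0)e^{\beta\bar{\Phi}(r_0)}$, the $t=0$ activity being bounded the same way with $z_0$ replaced by the smaller $\beta\bar{\Phi}(r_0)e^{\beta\bar{\Phi}(r_0)}$. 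The essential point is that
\[
\sqrt{z_0}\,K=e^{c}\sqrt{\beta}\,e^{\beta\bar{\Phi}(r_0)/2}\Bigl(\bar{\Phi}(r_0)^{1/2}+\sum_{x\in\mathbb{Z}^d(r_0)\setminus\{0\}}\lVert\Phi_{\{x,0\}}\rVert^{1/2}\Bigr)\longrightarrow 0
\]
as $r_0\to\infty$, because $\bar{\Phi}(r_0)\to0$ by absolute summability and the last sum is the tail of the convergent series $\sum_{x\neq0}\lVert\Phi_{\{x,0\}}\rVert^{1/2}$ --- this is exactly where hypothesis (\ref{campanino}) enters (and where (\ref{hightemp}) would have to be invoked instead, cf. the long-range Ising model). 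Hence $z_0\to0$, $B(\sqrt{z_0},K)\to0$ and $z_0\exp(B(\sqrt{z_0},K))<1$ for $r_0$ large, so Theorem \ref{cct} applies to both activities and gives
\[
\sum_{n\ge1}\sum_{(R_1,\ldots,R_n)\in\mathcal{R}^n_2}|\phi^T(R_1,\ldots,R_n)|\prod_{i=1}^n\eta^c_{\beta}(R_i)\le\bar{\alpha}\,|\Lambda^{r_0}_k|,\qquad\sum_{n\ge1}\sum_{(R_1,\ldots,R_n)\in\mathcal{R}^n_2}|\phi^T|\prod_{i=1}^n|\zeta_{\beta}(R_i)|\le\alpha\,|\Lambda^{r_0}_k|,
\]
with $\bar{\alpha}=\bar{\alpha}(\beta,\delta,r_0)\to0$ and $\alpha=\alpha(\beta,r_0)\to0$ as $r_0\to\infty$, while $\Xi_{\Lambda^{r_0}_k}(\eta^c_{\beta})$ and $\Xi^{(\omega_k\lor\omega')^{r_0}}_{\Lambda^{r_0}_k,\beta}$ both equal the exponential of their cluster series.

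Finally, subtracting the two exponents and bounding $|\prod_i\eta^c_{\beta}(R_i)-\prod_i\zeta_{\beta}(R_i)|\le\prod_i\eta^c_{\beta}(R_i)+\prod_i|\zeta_{\beta}(R_i)|$ exactly as at the end of the proof of Lemma \ref{lem23}, I would arrive at
\[
\Bigl|\mu^{(\omega_k\lor\omega')^{r_0}}_{\Lambda^{r_0}_k,\beta}\Bigl(\exp\bigl(\tfrac{it}{\sqrt{D_k}}\textstyle\sum_{x\in\Lambda^{r_0}_k}f\circ\theta_x\bigr)\Bigr)\Bigr|\le e^{(-c+\bar{\alpha}+\alpha)|\Lambda^{r_0}_k|}.
\]
Since $c$ is fixed while $\bar{\alpha}+\alpha\to0$, one may pick $r_1(\delta,\beta)$ so that $\bar{\alpha}+\alpha<c/2$ for all $r_0\ge r_1(\delta,\beta)$, which gives the conclusion with $C=c/2$, uniformly in $\omega'$ (neither $c$ nor the cluster bounds depend on $\omega'$). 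The step I expect to be the main obstacle is the order of quantifiers hidden in this scheme: the decay rate $c=c(\beta,\delta)$ must be extracted from Proposition \ref{prop1}(b) \emph{before} $r_0$ is chosen, and the cluster-expansion error must then be forced below $c$ purely by enlarging $r_0$; this works only because $\sqrt{z_0}\,K\to0$, which in turn relies on the cancellation of the $\bar{\Phi}(r_0)^{\pm1/2}$ factors together with the finiteness of $\sum_{x\neq0}\lVert\Phi_{\{x,0\}}\rVert^{1/2}$ in (\ref{campanino}). Everything else --- the Fubini expansion on the sublattice, the cancellation of the single-site factors, and the manipulation of the two cluster series --- is routine and identical in form to the corresponding parts of the proofs of Lemmas \ref{lem23} and \ref{lemcam1}.
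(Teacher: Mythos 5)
Your proposal is correct and follows essentially the same route as the paper's proof: expand on the sublattice as in Lemma \ref{lem23}, extract the uniform single-site decay $c=c(\beta,\delta)$ from Proposition \ref{prop1}(b), bound both activities via (\ref{ccthatzeta}) and (\ref{psigood}) with $z_0=\beta\bar{\Phi}(r_0)e^{2c+\beta\bar{\Phi}(r_0)}$ and $z_1=\beta\bar{\Phi}(r_0)e^{\beta\bar{\Phi}(r_0)}$, apply Theorem \ref{cct}, and choose $r_1(\delta,\beta)$ so the two cluster-series contributions are beaten by $c$. Your explicit verification that $\sqrt{z_0}K\to 0$ as $r_0\to\infty$ (cancellation of the $\bar{\Phi}(r_0)^{\pm 1/2}$ factors plus the tail of the series in (\ref{campanino})) is exactly the mechanism the paper relies on, just spelled out in more detail.
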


\begin{proof}
Consider $\Psi(x)$ as in (\ref{psigood}). For $\beta>0$, choosing 
\be
z_0:=\beta \bar{\Phi}(r_0)e^{2c+\beta \bar{\Phi}(r_0)} \quad \text{and} \quad
z_1:=\beta \bar{\Phi}(r_0)e^{\beta \bar{\Phi}(r_0)},
\ee
where $c>0$ is the constant in Proposition \ref{prop1} item (b), and taking $r_0$ sufficiently large, by Theorem \ref{cct} and (\ref{ccthatzeta}),
\begin{align*}
 \sum_{n=1}^{\infty}\sum_{(R_1,\ldots,R_n)\in \mathcal{R}^n_2}\abs{\phi^T(R_1,\ldots,R_n)}\prod_{i=1}^n \eta^c_{\beta}(R_i) &\le \frac{C(z_0,K)}{1-C(z_0,K)}B(\sqrt{z_0},K) \abs{\Lambda^{r_0}_k},\\
  \sum_{n=1}^{\infty}\sum_{(R_1,\ldots,R_n)\in \mathcal{R}^n_2}\abs{\phi^T(R_1,\ldots,R_n)}\prod_{i=1}^n \abs{\zeta_{\beta}(R_i)} &\le \frac{C(z_1,K)}{1-C(z_1,K)}B(\sqrt{z_1},K) \abs{\Lambda^{r_0}_k},
\end{align*}
where $\eta^c_{\beta}(R)=e^{c\abs{\tilde{R}}}\hat{\zeta}_{\beta}(R)$.
The proof follows the same strategy as the proof of Lemma \ref{lem23}, where there exists $r_1(\delta,\beta)$ such that the constant
\be
C=c-\frac{C(z_0,K)}{1-C(z_0,K)}B(\sqrt{z_0},K) -\frac{C(z_1,K)}{1-C(z_1,K)}B(\sqrt{z_1},K)
\ee
is positive for every $r_0\ge r_1(\delta,\beta)$.
\end{proof}

From Lemmas \ref{lemcam1} and \ref{lemcam2}, and using the bound (\ref{diff}), we conclude the proof of Theorem \ref{main}, condition (\ref{samecampanino}).

\section*{Acknowledgements}
The authors would like to kindly thank  Roberto Fern\'andez and Tong Xuan Nguyen for very useful discussions. We thank the referees and Aernout van Enter for their suggestions that helped us clarify our manuscript. Also, both authors would like to acknowledge the support of the NYU-ECNU Institute of Mathematical Sciences at NYU Shanghai. 
%
%
%

 \addcontentsline{toc}{section}{\bf References}

\end{document}